\newfont{\bbb}{msbm10 scaled 500}
\newfont{\bb}{msbm10 scaled 1100}
\newcommand{\Gc}{{\cal G}}
\newcommand{\Nc}{{\cal N}}
\newcommand{\Pc}{{\cal P}}
\newcommand{\Rc}{{\cal R}}
\newcommand{\Sc}{{\cal S}}
\newcommand{\Tc}{{\cal T}}
\newcommand{\argmax}{\operatornamewithlimits{arg\,max}}
\newcommand{\Ei}{\operatorname{Ei}}
\newcommand{\E}{\operatorname{E}}
\newtheorem{theorem}{Theorem}
\newtheorem{proposition}{Proposition}
\newtheorem{lemma}{Lemma}
\newtheorem{definition}{Definition}
\newtheorem{remark}{Remark}
\author{Karim Khalil and Eylem Ekici}
\title{Spectrum Access through Threats in Cognitive Radio Networks
\thanks{
The authors are with Department of Electrical and Computer Engineering, The Ohio State University, Columbus, OH 43210 USA. Email: \{khalilk,ekici\}@ece.osu.edu.}
}
\begin{document}
\maketitle

\begin{abstract}
We consider multiple access games in which primary users are interested in maximizing their confidential data rate at the minimum possible transmission power and secondary users employ eavesdropping as a leverage to maximize their data rate to a common destination at minimum transmission energy. For the non-cooperative static game, Nash equilibria in pure and mixed strategies are derived and shown to be Pareto inefficient in general, when channel gains are common knowledge. For the two-player Stackelberg game where the primary user is the leader, it is shown that the secondary user is forced to play as the follower where the Stackelberg equilibrium dominates the Nash equilibrium, even if the eavesdropper channel is better than the primary channel. Here, the utility achieved by the Stackelberg game Pareto-dominates the achieved Nash utility. Moreover, we study the unknown eavesdropper channel case numerically where the primary user has only statistical knowledge about the channel gain. We compare the results to the first scenario and show that it is not always beneficial for the cognitive user to hide the actual eavesdropper channel gain. Finally, we extend the equilibrium analysis to a multiple SU game where the primary system selects a subset of the secondary users to transmit such that the performance of the primary users is maximized.
\end{abstract}

\begin{keywords}
Cognitive radio, game theory, Stackelberg games, physical layer security, multiple access
\end{keywords}

\section{Introduction}
Research in Cognitive Radio Networks (CRNs) is motivated by wireless spectrum scarcity with respect to the ever growing demand of spectrum resources \cite{Akyildiz:NeXt06}. Recent studies show that the spectrum is inefficiently utilized by current licensees \cite{FCC}. Cognitive Radio has been proposed as means of solving this problem. The literature on CRNs can be divided into two main categories. In the first, which is often called the commons model, unlicensed users sense and access the spectrum of licensed systems as long as interference caused to the primary system is capped. Here, a primary system remains oblivious to the activity of secondary systems. In the second category, which is called the property-rights (or spectrum-leasing) model, interaction takes place between both systems such that unlicensed users are granted spectrum access only when they {\it add value} to the licensed system. This can be done, for instance, by rewarding the primary systems with monetary rewards \cite{Jayaweera:Dynamic09} or by improving its performance \cite{Khalil:Optimal11}. We call this type of interaction between primary and secondary systems {\it positive interaction}. In this paper, however, we study a different type of interaction (i.e., negative interaction) where secondary users threaten to compromise the privacy of primary systems.

Due to its broadcast nature, a big challenge in wireless networks is data privacy. Specifically, transmissions over the wireless channel are susceptible to eavesdropping. This issue is more important in cognitive radio settings, where unlicensed users may be equipped with advanced transceivers and are capable of communicating on multiple bands. Recently, the notion of physical layer secrecy was introduced which captures the wireless security from an information theoretic point of view \cite{Liu:Securing09}. Unlike conventional cryptography at higher layers, information theoretic secrecy allows for the development of provable and quantifiable secrecy measures without imposing restrictions on the computational capability of the eavesdroppers. These schemes are implemented based on signal processing and coding techniques at the physical layer. In this paper, we employ physical layer secrecy as a measure of privacy for the primary systems.

Traditionally, adversarial activity is adopted to cause damage to the attacked system by minimizing its utility. For example, malicious activity in a CRN was previously studied in \cite{Liu:Cognitive11}, where the objective of the malicious users is to degrade the performance of the primary system using different attacks like routing disruption and traffic injection. In \cite{Wu:Information11}, a Stackelberg game is studied in which transmissions of trusted cognitive users can improve the secrecy rates of primary users (PUs) with respect to an external eavesdropper. Other examples of security attacks in CRNs can be found in \cite{Liu:Cognitive11} as well.
In contrast to the existing work, in this paper, we consider adversarial activity where the cognitive users have a {\it different goal}. Specifically, in a system where both PUs and SUs\footnote{We use the terms 'user' and 'player'; as well as 'secondary' and 'cognitive' interchangeably throughout the paper.} transmit to a common destination (e.g., base station or access point), SUs employ eavesdropping as a leverage to gain access to the spectrum and maximize their own performance. This threat possibly forces PUs to lower their transmission power levels to improve their utility. Consequently, SUs can achieve higher rates since interference from the primary transmission to the secondary signal at the destination will be lowered, as well. Since the objectives of the different users are conflicting, we employ a game theoretical framework to study this adversarial situation. In our game formulations, we characterize equilibrium points and hence the optimal resource allocation for each of the primary and secondary systems. 

Game theory has been extensively employed in the analysis of wireless network problems in general and CRNs in particular \cite{Srivastava:Using05, Wang:Game10}. Specifically, the Multiple Access Channel (MAC) is one of the basic channel models that has been well studied using game theoretical techniques. For example, in \cite{Belmega:Power09}, a multiple access game is considered where a coordination signal is used to determine the order of the successive interference cancellation at the destination. Design of resource allocation algorithms for fading multiple access channels is studied in \cite{Lai:Water08} using both Nash and Stackelberg equilibrium concepts. In \cite{Karamchandani:Cooperation11}, cooperative random access and cooperative token ring are studied using coalitional game theory. A comprehensive survey on game theoretic approaches for interference free multiple access networks is presented in \cite{Akkarajitsakul:Game11}.

To the best of the authors knowledge, this work is the first to consider leveraging eavesdropping capabilities to access channel in a cognitive radio setting. We assume battery powered users and thus users are interested in saving energy. In our model, PUs are interested in maximizing their secrecy rate minus the cost of transmission power to a destination $D$. On the other hand, SUs wish to transmit best effort traffic to the common destination $D$ while minimizing their energy consumption. To this end, SUs threaten to eavesdrop the transmission of PUs to increase their own utility. In particular, in our model, SUs are equipped with half duplex wireless transceivers and employ a time division scheme where the available time is divided between transmitting own information and eavesdropping the information transmitted by PUs. All players are rational and selfish\footnote{In this paper, we study a different model for CRNs where no regulation is in place. This can model, for instance, the case when SUs has no connection to the internet and can not register to a centralized database.} and choose their strategy to maximize their own utility functions. The {\it contributions} of the paper are as follows. First, we study the two-player game and characterize the Nash Equilibrium (NE) in both pure and mixed strategies. For certain ranges of the channel coefficients and energy cost parameters, the equilibrium point is shown to be inefficient and results in a lose-lose situation. A leader-follower game is then formulated in which the PU is the leader who specifies its strategy and then the follower reacts so that its utility is maximized. In this case, Stackelberg Equilibrium (SE) is characterized and shown to dominate, in the sense of Simaan \cite{Simaan:Equilibrium77},  the NE and hence the follower is forced to comply with this strategy. In these games, we first assume that all channel gains are common knowledge to both players. Then, we analyze the more realistic scenario when PU has only statistical knowledge about the channel gain of the eavesdropper. Here, the cognitive user has the ability to either hide or reveal the actual value of the eavesdropper channel coefficient. Interestingly, we show that its not always beneficial for SU to hide this information from PU, especially when secondary channel gains are low. Next, we extend the analysis to multi-player games and characterize the optimal decoding order and equilibria for a special case of channel condition and cost parameters. Finally, we present an algorithm that is implemented by the primary system to select secondary users to transmit such that the utility of the primary system is maximized.

The rest of the paper is organized as follows. Section \ref{sec:background} provides the required background from game theory and information theory. In Section \ref{sec:nash}, we present our game setup and derive its NE for all values of channel conditions and energy cost parameters. In addition, we show via examples how the derived equilibrium points can be inefficient. Stackelberg formulation is then considered in Section \ref{sec:stackelberg} and the SE is shown to Pareto-dominate the NE for all ranges of channel coefficients when PU is the leader. The effect of unknown eavesdropper channel at PU is then considered in Section \ref{sec:unknownb}. In Section \ref{sec:multi}, we extend the game to multiple players. Finally, the paper is concluded in Section \ref{sec:conclusion}.

\section{Background}
\label{sec:background}
In this section, we review results from information theory about the multiple access channel and the wiretap channel that we employ in our game formulation. Moreover, we present certain definitions from non-cooperative game theory that are essential in our analysis.

\subsection{Multiple Access Channel and Wire-tap Channel}
\label{sec:mac}


The two-user multiple access channel is a well-known channel model in the network information theory \cite{Cover:Elements91}. Let the channel capacity function be defined as $C(x) = \frac{1}{2} \log(1+x)$, where logarithms are taken to the base $2$. The capacity region of a channel defines the achievable rates so that the receiver can decode the information reliably, i.e., with an arbitrarily small probability of decoding error. For the two-user Additive White Gaussian Noise (AWGN) Multiple Access Channel (MAC), the capacity region is a pentagon given by
\begin{align}
R_0(P_0) &\leq C(a P_0),\notag\\ R_1(P_1) &\leq C(c P_1),\notag\\
R_0(P_0) + R_1(P_1) &\leq C(a P_0 + b P_1),\label{eq:rate}
\end{align}
where $R_0(\cdot), R_1(\cdot)$ are the achievable rates for transmitters $0$ and $1$, respectively, $P_0, P_1$ are the transmission power levels, $a>0$, $c>0$ are the (constant) channel power gains and noise is assumed to have unit variance. 
We assign user $0$ to be the PU and user $1$ to be the SU. The two corner points of the capacity region are achieved by successive interference cancellation at the decoder where the order of decoding determines the corner point \cite{Cover:Elements91}. 

In this paper, we employ rate expressions on the boundary of the region \eqref{eq:rate}. Specifically, we assume that the destination 
always decodes SU first in the interference cancellation decoder and hence gives priority to PU's signal. Consequently, the achievable rates at the destination for PU and SU are given by
\begin{align}
R_0(P_0) &= C(a P_0),\notag\\
R_1(P_1) &= C(\frac{c P_1}{1+ a P_0}).\label{eq:r2}
\end{align}

We note that the rate pairs on the boundary of the capacity region \eqref{eq:r2} can only be achieved when both transmitters coordinate the codebooks and rates used in the channel coding \cite{Cover:Elements91}. This coordination can be facilitated by the base station. If no coordination is assumed, then interference will affect achieved rates of both users.

\begin{figure}[bht]
    \centerline{
        \scalebox{0.7}{
          \input{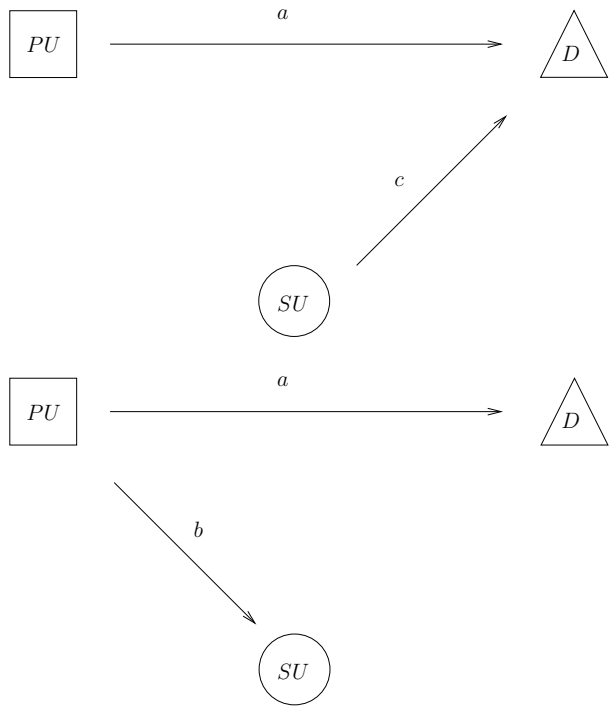}
         }
    }
    \caption{Channel model.}
    \label{fig:model2}
\end{figure}

In the presence of an eavesdropper, the achievable secrecy rate of a transmitter is the rate at which the entropy of the sender's message at the eavesdropper is arbitrarily close to the entropy of the message itself, given the received signals. In other words, the secure rate is the rate at which the message of the sender is almost independent from the received signals at the eavesdropper. Achievability schemes (i.e., channel coding) are designed to maximize the confusion at the eavesdropper while maximizing the reliable rate at the legitimate receiver by exploiting the wireless channel characteristics such as noise and fading. For the Gaussian channel, the secrecy capacity is given by \cite{Csiszar:Broadcast78}
\begin{align}
R_s (P) = \left[C(aP)-C(bP)\right]^+,
\end{align}
where $a,b>0$ are the channel gains of the legitimate receiver's channel and the eavesdropper channel, respectively. The wiretap channel is sketched in the lower part of Figure \ref{fig:model2}

In our game formulation, we assume that the SU employs a half duplex transceiver and can either transmit to $D$ or eavesdrop the transmission of PU at any given time. Thus, the channel model during SU's transmission is a multiple access channel, while it is a wiretap channel during eavesdropping, as shown in Figure \ref{fig:model2}. Throughout the paper, we refer to the channel between PU and $D$ as the primary channel, the channel between SU and D as the secondary channel, and the channel between PU and SU as the eavesdropper channel. 

Our network model is appropriate for CRNs for multiple reasons. First, when SU is transmitting information to the common destination $D$, it is given a lower priority than PU. This is clear from the achievable rate \eqref{eq:r2} where SU's  achievable rate decreases with increasing transmission power of PU while the opposite is not true. Through the threat of SU, PU may be forced to decrease its transmission power $P_0$ and hence SU achieves higher data rate, as will be discussed in the next section. Finally, as discussed in Section \ref{sec:stackelberg}, the analysis reveals that SU is forced to follow PU in a leader-follower game.

\subsection{Game Theory Basics}
\label{sec:gamebasics}
Game theory provides an analytical framework to analyze situations of conflict between multiple decision makers that are {\itshape{\bfseries rational, intelligent and selfish}}. These attributes accurately characterize wireless devices designed to optimize their own performance. Here, we borrow definitions from \cite{Basar:Dynamic95} and \cite{Simaan:Equilibrium77} that are needed for the equilibrium analysis in the following sections. A strategic game is any $\Gc$ of the form $\Gc=(\Nc,(\Sc_i)_{i\in\Nc},(u_i)_{i\in\Nc})$, where $\Nc$ is the set of players in the game. Let the utility of a player be given by $u_i(s_i,s_{-i})$ where $s_i\in\Sc_i$ is the pure (deterministic) strategy (or action) of player $i$, chosen from the set of available strategies $\Sc_i$ and $s_{-i}$ is the strategy profile of all other players, except for player $i$ chosen from $\times_{j\in\Nc-\{i\}} \Sc_j$. In the following definitions, we focus on two-player games, i.e., $\Nc=\{1,2\}$.
\begin{definition}
An NE point is a strategy pair $(s_1^*,s_2^*)$ such that
\begin{align}
u_0(s_0^*,s_1^*)\geq u_0(s_0,s_1^*), \forall s_0\in\Sc_0,\notag\\
u_1(s_0^*,s_1^*)\geq u_1(s_0^*,s_1), \forall s_1\in\Sc_1.
\end{align}
\label{def:nash}
\end{definition}
This definition implies that at an NE, no user has incentive to unilaterally deviate to other operating points. Assume there exist two well defined unique mappings $T_0:\Sc_1\rightarrow\Sc_0$ and $T_1:\Sc_0\rightarrow\Sc_1$ such that for any fixed $s_1\in\Sc_1$, $u_0(T0 (s_1), s_1)\geq u_0(s_0,s_1), \forall s_0\in\Sc_0$ and for any fixed $s_0\in\Sc_0$, $u_1(s_0, T_1 (s_0))\geq u_1(s_0,s_1), \forall s_1\in\Sc_1$, i.e., $T_i$ defines strategies that are best response to each strategy chosen by the other player. Let the set $D_i = \{(s_0,s_1)\in\Sc_0\times\Sc_1:s_i=T_i (s_j)\}$ for $i=0,j=1$ and $i=1,j=0$ be called the rational reaction set of player $i$ and let $D_i(s_j) = \{s_i\in\Sc_i:(s_i,s_j)\in D_i\}$. Note that any pair in the set $D_0\cap D_1$ is an NE according to Definition \ref{def:nash}. Hence, a strategy profile $S$ is an NE if and only if the strategy of every player in $S$ is a best response to the other player's strategy. 

When mixed strategies are allowed, Definition \ref{def:nash} is written for expected utilities instead. Let $f_i:\Sc_i\rightarrow [0,1]$ be a mixed strategy for player $i$ which defines a probability density function over $s_i$ such that $f_i(s_i)\geq 0 \forall s_i\in\Sc_i$. Here, we define $supp(f_i)$ as the set of actions for player $i$ with positive probability in the mixed strategy $f_i$, that is, $supp(f_i)=\{s\in \Sc_i\ :f_i(s)>0\}$. At any Mixed Strategy Nash Equilibrium (MSNE), strictly dominated actions can not be assigned positive probability \cite{Fudenberg:Game93}. Here, an action $s_i\in \Sc_i$ for player $i$ is strictly dominated if there exists a mixed strategy $f_i$ such that ${\mathbb E}[u_i(f_i,s_j)] > u_i(s_i,s_j)\ \forall s_j\in\Sc_j,\ j\neq i$. The following proposition helps in the characterization of MSNE \cite{Fudenberg:Game93}. 

\begin{proposition}
A mixed strategy profile $f^*=(f_0^*,f_1^*)$ is an MSNE if and only if for each player $i$ there exists some $c_i\in\Rc$ such that 
\begin{align}
{\mathbb E}[u_i(s_i,f_j)] =& c_i,\ \forall s_i\in supp(f_i),\\ {\mathbb E}[u_i(s_i,f_j)] <& c_i,\ \forall s_i\notin supp(f_i),\ i\neq j.\label{eq:property1}
\end{align}
\label{prop:prop}
\end{proposition}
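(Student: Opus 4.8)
\emph{Proof outline.} I would prove the two directions separately, in each case taking $c_i$ to be the equilibrium expected utility ${\mathbb E}[u_i(f_i^*,f_j^*)]$ of player~$i$; recall that, reading Definition~\ref{def:nash} for expected utilities, $f^*$ is an MSNE exactly when $f_i^*$ is a best response to $f_j^*$ for each player $i$.

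For the ``if'' direction, suppose the two stated conditions hold with the constant $c_i$ attached to player~$i$, and fix the opponent at $f_j^*$. I would expand an arbitrary mixed deviation $f_i$ by linearity of expectation,
\[
{\mathbb E}[u_i(f_i,f_j^*)] \;=\; \int_{\Sc_i} {\mathbb E}[u_i(s_i,f_j^*)]\,f_i(s_i)\,ds_i \;\le\; c_i ,
\]
the inequality holding because the integrand is $\le c_i$ pointwise --- equal to $c_i$ on $supp(f_i^*)$ and strictly below it elsewhere --- while $f_i$ integrates to one; and I would observe that $f_i^*$ itself attains the value $c_i$, since it places all of its mass where the integrand equals $c_i$. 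This shows $f_i^*$ is a best response to $f_j^*$, and as it holds for both players, $f^*$ is an MSNE.

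For the ``only if'' direction, assume $f^*$ is an MSNE and set $c_i := {\mathbb E}[u_i(f_i^*,f_j^*)]$. Since every pure action $s_i$ is in particular an available strategy, optimality of $f_i^*$ immediately gives ${\mathbb E}[u_i(s_i,f_j^*)] \le c_i$ for all $s_i\in\Sc_i$, which covers the off-support condition in \eqref{eq:property1}. For the on-support condition I would write $c_i$ as the average of the numbers ${\mathbb E}[u_i(s_i,f_j^*)]$ under $f_i^*$, all of which are now known to be $\le c_i$: if one of them, say at $s_i^0\in supp(f_i^*)$, were strictly below $c_i$, then (using continuity of $u_i$ to spread this strict inequality over a neighbourhood of $s_i^0$ that carries positive weight under $f_i^*$) the average would drop strictly below $c_i$, contradicting its definition; hence ${\mathbb E}[u_i(s_i,f_j^*)] = c_i$ throughout $supp(f_i^*)$.

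The delicate point --- and the only real obstacle --- is the continuum of actions in the ``only if'' argument: I need $s_i\mapsto{\mathbb E}[u_i(s_i,f_j^*)]$ to be measurable and the ``shift probability mass toward a strictly better action'' manipulation to be legitimate, which I would secure by invoking continuity of the (capacity-based) utilities used throughout this paper. I would also flag that at a general MSNE only the weak inequality ${\mathbb E}[u_i(s_i,f_j^*)]\le c_i$ is guaranteed off the support; the strict form asserted in \eqref{eq:property1} is the generic situation and is all that the later sections require, where it serves to \emph{exclude} actions from the support rather than to certify them.
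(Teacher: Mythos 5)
The paper never proves this proposition: it is stated as a background fact and attributed to \cite{Fudenberg:Game93}, so there is no in-paper argument to compare yours against. Your proof is the standard textbook one and is essentially correct. The ``if'' direction --- bounding ${\mathbb E}[u_i(f_i,f_j^*)]$ by $c_i$ via linearity of expectation, since the integrand is pointwise at most $c_i$, and noting that $f_i^*$ attains $c_i$ because it concentrates all of its mass where the integrand equals $c_i$ --- is exactly the usual argument. The ``only if'' direction is also sound, and your handling of the continuum of actions (measurability of $s_i\mapsto{\mathbb E}[u_i(s_i,f_j^*)]$, plus continuity to give a ``strictly worse'' support point a neighbourhood of positive $f_i^*$-mass) addresses the one genuine technical issue; note that with the paper's definition $supp(f_i)=\{s\in\Sc_i: f_i(s)>0\}$ rather than the topological support, the on-support equality is in general only guaranteed $f_i^*$-almost everywhere, which is precisely what your continuity assumption repairs. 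Your closing caveat is correct and worth stressing: at a general MSNE only the weak inequality ${\mathbb E}[u_i(s_i,f_j^*)]\le c_i$ holds off the support, so the ``only if'' direction of the proposition as literally written, with strict inequality in \eqref{eq:property1}, is false --- an unplayed action may be exactly indifferent. The statement should read $\le$ there; the strict form is only ever used in the paper in the ``if'' direction (to certify candidate supports and exclude dominated actions), so nothing downstream is affected.
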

That is, at an MSNE, each player chooses the support for its mixed strategy so that it contains only pure strategies that leads to best response to the other players strategies. One important result about MSNE is that for any continuous game, i.e., game with continuous utility functions, {\it there exists at least one NE} point in mixed strategies \cite{Fudenberg:Game93}.

A generalization of the strategic form game is the Bayesian game \cite{Myerson:Game91} where some players have private information about the game that other players do not have. A Bayesian game takes the form
\begin{align} \Gc_b=(\Nc,(\Sc_i)_{i\in\Nc},(\Tc_i)_{i\in\Nc},(p_i)_{i\in\Nc},(u_i)_{i\in\Nc})
\end{align}
where $\Tc_i$ is the set of {\itshape types} of the $i^{\text{th}}$ player that specifies the information player $i$ only knows about the game and $p_i$ is the probability function specifying what player $i$ believes about the other players' types given its own type. Here, it is assumed that each player $i$ knows the structure of $\Gc_b$ and its own type $t_i\in\Tc_i$. A {\itshape Bayesian Equilibrium} (BE) is an NE for $\Gc_b$ such that each player maximizes its expected utility. Therefore, the Bayesian game assumes that users are risk neutral.

The last type of game formulations we employ is Stackelberg games. In a Stackelberg game, the leader first makes a decision about its own strategy and then followers choose their strategies accordingly. In the following definitions, we fix player 1 as the leader and player 2 as the follower. The leader chooses the strategy that maximizes its utility from the rational reaction set of the follower.
\begin{definition}
A strategy $\bar{s}_0\in\Sc_0$ is a Stackelberg equilibrium strategy for the leader if
\begin{align}
\inf_{s_1\in D_1(\bar{s}_0)}  u_0(\bar{s}_0,s_1)\geq \inf_{s_1\in D_1(s_0)} u_0(s_0,s_1),\ \ \ \forall s_0\in \Sc_0.
\end{align}
\label{def:stackelberg}
\end{definition}

In this paper, we sometimes use the shorthand Stackelberg Equilibrium (SE) to mean Stackelberg equilibrium strategy. We also use the shorthand SEP (respectively SES) to indicate an SE with the primary user (respectively secondary user) as the leader.

The utility of the leader is a well defined quantity \cite{Basar:Dynamic95} and is given by 
\begin{align}
\bar{u}_0 = \sup_{s_0\in\Sc_0}\ \inf_{s_1\in D_1(s_0)} u_0(s_0,s_1).
\end{align}
A Stackelberg equilibrium strategy for the leader may not exist in general \cite{Basar:Dynamic95}. In this case, however, an $\epsilon$-SE can possibly exist in which the leader achieves utility $\epsilon$ close to $\bar{u}_0$.
\begin{definition}
Let $\epsilon > 0$ be a given real number. Then, a strategy $\bar{s}_{0\epsilon}\in\Sc_0$ is called an $\epsilon$-Stackelberg equilibrium strategy for the leader if 
\begin{align}
\inf_{s_1\in D_1(\bar{s}_{0\epsilon})} u_0(\bar{s}_{0\epsilon}, s_1) \geq \bar{u}_0 -\epsilon.
\end{align}
\label{def:ubar}
\end{definition}
One important property is that an $\epsilon$-SE always exists in a game $\Gc$ if $\bar{u}_0$ is finite \cite{Basar:Dynamic95}. 

From Definitions \ref{def:nash}, \ref{def:stackelberg}, it can be seen that the utility achieved by a user in a Stackelberg game under its own leadership is always at least as good as the utility achieved under any NE for the same game \cite{Basar:Dynamic95}. This fact motivates the following definition.
\begin{definition}
A Stackelberg equilibrium strategy $(\bar{s}_0,\bar{s}_1)$ with player 0 as the leader is said to dominate an NE $(s_0^*,s_1^*)$ if
\begin{align}
u_1(\bar{s}_0,\bar{s}_1)\geq u_1(s_0^*,s_1^*). \label{eq:dominance}
\end{align}
\label{def:dominance}
\end{definition}
In case \eqref{eq:dominance} is true, both the leader and the follower would better choose to play the Stackelberg game under the leadership of player 1. In Section \ref{sec:stackelberg}, this property will be vital to show that SU accepts to be a follower in a Stackelberg game that leads to more efficient performance for both players.


\section{Cognitive Eavesdropper Nash Game}
\label{sec:nash}
In this and the following sections, we consider a two-player static non-cooperative game $\Gc$ where the players are PU and SU. We consider more general case with multiple users in Section \ref{sec:multi}. In our games, we consider backlogged users where each user always has packets to transmit. We also consider battery powered users and thus saving evergy is important for all users. Here, the strategy of PU is to select the transmission power level $s_0 = P_0\in[0, P^{\text{max}}_0 ]$, while the strategy of SU is to choose the fraction $s_1=\alpha\in[0,1]$ by which it divides the total available time $T$ into transmission time $\alpha T$ and eavesdropping time $(1-\alpha)T$. Without loss of generality, we assume $T=1$. Note that PU is transmitting all the time while SU is either transmitting its own data or eavesdropping the primary traffic. 

PU is interested in maximizing its secrecy rate to the destination at the minimum power cost. The utility function of PU is given by
\begin{align}
u_0' (P_0,\alpha) = \left[C(a P_0)-(1-\alpha)C(b P_0)\right]^+ - J(P_0), \label{eq:u1'}
\end{align}
where $J(\cdot)$ is the power cost function which is increasing in $P_0$. In our analysis, we employ a linear power cost function $J(P_0) = \gamma P_0$, where $\gamma>0$ is the unit power cost. This choice is made only for analytical tractability of the analysis. However, the same equilibrium results also hold for general increasing cost functions $J(\cdot)$ such that $J(0)=0$ as discussed in Section \ref{sec:stackelberg}. The second term in \eqref{eq:u1} reflects the rate eavesdropped by the cognitive user. For example, when $\alpha = 1$, SU is transmitting all the time and the eavesdropped rate is zero. We note that we study a fundamental problem and thus we use the optimal multiple access scheme when both users are transmitting and the optimal secrecy coding scheme when only the PU is transmitting. In particular, there is no interference term due to the transmission of SU since we only consider the optimal multiple access scheme in which the traffic of SU is decoded first and then subtracted from the received signal using interference cancellation. Define the utility function
\begin{align}
u_0 (P_0,\alpha) = C(a P_0)-(1-\alpha)C(b P_0) - \gamma P_0. \label{eq:u1}
\end{align}
In our analysis, we use the form \eqref{eq:u1} for the utility function of PU, in place of \eqref{eq:u1'} to simplify the analysis. The following lemma shows that, for a given $\alpha$, the power level maximizing $u_0(\cdot)$, i.e. $P^*(\alpha)$, also maximizes $u_0'(\cdot)$.
\begin{lemma}
For a given $\alpha$, $P^*(\alpha)$ that maximizes $u_0(P_0,\alpha)$ also maximizes $u_0'(P_0,\alpha)$.
\label{lemma:approx}
\end{lemma}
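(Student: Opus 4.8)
The plan is to show that for fixed $\alpha$, the functions $u_0(P_0,\alpha)$ and $u_0'(P_0,\alpha) = \left[C(aP_0)-(1-\alpha)C(bP_0)\right]^+ - \gamma P_0$ have the same maximizer over $P_0\in[0,P_0^{\text{max}}]$. The only difference between the two is the positive-part operator $[\cdot]^+$ applied to the secrecy-rate term $g(P_0):=C(aP_0)-(1-\alpha)C(bP_0)$. So the core of the argument is to understand the sign of $g(P_0)$ as a function of $P_0$. First I would note that $g(0)=0$, and compute $g'(0) = \tfrac{1}{2\ln 2}\bigl(a-(1-\alpha)b\bigr)$. I would split into the two natural cases.

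\textbf{Case 1: $a \ge (1-\alpha) b$.} Here I claim $g(P_0)\ge 0$ for all $P_0\ge 0$, so that $[g(P_0)]^+ = g(P_0)$ identically, and hence $u_0' \equiv u_0$ on the whole domain; the maximizers trivially coincide. To see $g\ge0$, write $g(P_0) = \tfrac12\log\frac{1+aP_0}{(1+bP_0)^{1-\alpha}}$ and check that the argument of the log is $\ge 1$: since $(1+bP_0)^{1-\alpha}\le 1+(1-\alpha)bP_0$ by concavity of $x\mapsto x^{1-\alpha}$ (Bernoulli's inequality), and $1+(1-\alpha)bP_0 \le 1+aP_0$ under the case hypothesis, we get $g(P_0)\ge 0$.

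\textbf{Case 2: $a < (1-\alpha) b$.} Now $g'(0)<0$, so $g$ is negative for small $P_0>0$. I would argue that the set $\{P_0 : g(P_0) > 0\}$ is empty or, if nonempty, that on it $u_0' = g - \gamma P_0$ while on its complement $u_0' = -\gamma P_0$, which is strictly decreasing. The key observation is that $P^*(\alpha)$, the maximizer of $u_0 = g - \gamma P_0$, satisfies $u_0(P^*(\alpha),\alpha) \ge u_0(0,\alpha) = 0$, i.e. $g(P^*(\alpha)) \ge \gamma P^*(\alpha) \ge 0$; hence $P^*(\alpha)$ lies in the region where $[g]^+ = g$, so $u_0'(P^*(\alpha),\alpha) = u_0(P^*(\alpha),\alpha)$. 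For any other $P_0$: if $g(P_0)\ge 0$ then $u_0'(P_0,\alpha) = u_0(P_0,\alpha) \le u_0(P^*(\alpha),\alpha) = u_0'(P^*(\alpha),\alpha)$; and if $g(P_0) < 0$ then $u_0'(P_0,\alpha) = -\gamma P_0 < 0 \le u_0'(P^*(\alpha),\alpha)$ (using $P_0 > 0$ in that region since $g(0)=0$, and $u_0'(P^*(\alpha),\alpha)\ge u_0'(0,\alpha)=0$). Either way $P^*(\alpha)$ maximizes $u_0'$.

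The main obstacle—really the only subtlety—is making sure the boundary case $P^*(\alpha)=0$ and the possibility of ties (multiple maximizers of $u_0$) are handled cleanly; the inequalities above are written to stay valid (non-strict where needed) in those degenerate situations, and concavity of $u_0$ in $P_0$ (which follows from concavity of $C(a P_0)$ and linearity of the other terms once one is careful about the $-(1-\alpha)C(bP_0)$ piece—note $u_0$ need not be globally concave, but this is not needed: we only use that $P^*(\alpha)$ is a global maximizer with value $\ge 0$) guarantees $P^*(\alpha)$ is well-defined. I would remark that the same proof goes through verbatim for any increasing cost $J$ with $J(0)=0$, since the only property used is $J(P_0)\ge 0 = J(0)$.
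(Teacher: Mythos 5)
Your proof is correct and rests on the same key observation as the paper's: since $u_0(0,\alpha)=0$, the maximizer of $u_0$ has nonnegative value, which forces the secrecy term to be nonnegative there so that $u_0'=u_0$ at that point, while $u_0'\leq 0$ wherever the positive-part operator is active. Your Case~2 argument alone already covers all parameter values (making Case~1 and the Bernoulli-inequality digression redundant, though harmless), and it matches the paper's reasoning, which partitions the domain by the sign of $u_0$ rather than of the secrecy term.
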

\begin{proof}
Fix some $\alpha$. Suppose $u_0(\cdot)>0$ over some interval $I\subset(0,P_0^{\text{max}}]$ and $u_0(\cdot)\leq 0$ over $[0,P_0^{\text{max}}]-I$. Note that $u_0'(\cdot)=u_0(\cdot)$ over the interval $I$ and $u_0'(\cdot)\leq 0$ over $[0,P_0^{\text{max}}]-I$. Thus, $P^*(\alpha)\in I$ also maximizes $u_0'(\cdot)$. Now suppose that $u_0(\cdot)\leq 0$ over $[0,P_0^{\text{max}}]$. Thus, $u_0'(\cdot)\leq 0$ over $[0,P_0^{\text{max}}]$ and $P^* = 0$ also maximizes $u_0(\cdot)$.
\end{proof}
Note that the above proof is valid for any general energy cost function which is increasing in $P_0$.
We assume that SU is bounded by a maximum power constraint $ P^{\text{max}}_1 $. SU uses $ P^{\text{max}}_1 $ as its fixed transmission power level over the entire transmission period. In addition, we assume that SU is penalized for energy consumption using a linear cost function. The utility function of SU is thus given by
\begin{align}
u_1(P_0,\alpha) = \alpha \left(C\left(\frac{c  P^{\text{max}}_1 }{1+ a P_0}\right)-\beta,\right)
\label{eq:u2}
\end{align}
where $\beta>0$ is the energy cost per unit transmission time. In this section and in Section \ref{sec:stackelberg}, we assume that $a,b,c,\gamma,\beta, P^{\text{max}}_1 ,  P^{\text{max}}_0 $ are common knowledge. Hence, the games considered in these sections are non-cooperative static games with {\it complete information}. The goal of each user is to maximize its own utility by selecting the appropriate strategy given the knowledge of the other user's utility function. 

The following notation will be useful in our analysis of the game $\Gc$. We define $P^*(\alpha)$ as PU's power level that maximizes $u_0(P_0,\alpha)$ for a given $\alpha\in[0,1]$. Hence, $P^*(0)$ and $P^*(1)$ are PU's power levels that maximize the functions $u_0(P_0,0)$ and $u_0(P_0,1)$, respectively. Also we define the threshold power level $Q$ as 
\begin{align}
Q = \frac{1}{a}\left(\frac{c  P^{\text{max}}_1 }{2^{2 \beta}-1}-1,\right)
\label{eq:Q}
\end{align}
where the slope of the function $u_1(P_0,\alpha)$ is positive if $P_0 < Q$.

To ease our characterization of the NE of $\Gc$, the result is separated into two cases. In the first, the primary channel is assumed to be stronger than the eavesdropper channel, i.e., $a\geq b$. Here, we focus our analysis only on pure strategies. This is made possible without any loss of generality since the strategy sets are convex and the utility functions are concave in the corresponding variables \cite{Basar:Dynamic95} in this case, as will be shown in the proof of Theorem \ref{thm:nash1}. In the second case, the complementary case $a < b$ is considered. In this case, mixed strategy Nash equilibria may exist and hence an extended argument is required. The result of this scenario is presented in Theorem \ref{thm:nash2}. The following Lemma characterizes the structure of the function $u_0(P_0,\alpha)$.

\begin{lemma}
\label{lemma:u1}
For a given $\alpha$, $P^*(\alpha)$ maximizes the function $u_0(P_0,\alpha)$ with respect to $P_0$, where
\begin{align}
P^*(\alpha) &= \argmax_{P_0\in\{0,P'(\alpha)\}} u_0(P_0,\alpha),\label{eq:P*}\\
P'(\alpha) &= \argmax_{P_0\in [\hat{P}(\alpha), P^{\text{max}}_0 ]} u_0(P_0,\alpha) = \frac{X(\alpha)+\sqrt{X(\alpha)^2-Y(\alpha)}}{2\bar{\gamma}ab}, \label{eq:P'}
\end{align}
and
\begin{align}
\hat{P}(\alpha)&\triangleq\frac{b\sqrt{1-\alpha}-a}{ab(1-\sqrt{1-\alpha})}, \label{eq:concavitycond}\\
X(\alpha) &=  \alpha a b -\bar{\gamma} (a+b),\\
Y(\alpha) &= 4\bar{\gamma} a b (\bar{\gamma}-a+b(1-\alpha)).
\end{align}
\end{lemma}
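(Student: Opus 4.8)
The plan is to treat $\alpha$ as fixed throughout and analyze $u_0(P_0,\alpha)=C(aP_0)-(1-\alpha)C(bP_0)-\gamma P_0$ as a one-variable function on $[0,P_0^{\text{max}}]$ via its first and second derivatives. First I would write $u_0(P_0,\alpha)=\tfrac12\log(1+aP_0)-\tfrac{1-\alpha}{2}\log(1+bP_0)-\gamma P_0$ and compute
\begin{align}
\frac{\partial u_0}{\partial P_0} &= \frac{1}{2\ln 2}\left(\frac{a}{1+aP_0}-\frac{(1-\alpha)b}{1+bP_0}\right)-\gamma.
\end{align}
Setting this to zero and clearing denominators yields a quadratic in $P_0$; with $\bar\gamma\triangleq 2\gamma\ln 2$ (the natural constant the paper must be using in $X,Y$) the quadratic is $\bar\gamma ab\,P_0^2 - X(\alpha)P_0 + \tfrac14\cdot\big(\text{something}\big)=0$, and matching coefficients shows its roots are exactly $\big(X(\alpha)\pm\sqrt{X(\alpha)^2-Y(\alpha)}\big)/(2\bar\gamma ab)$. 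So $P'(\alpha)$ as defined in \eqref{eq:P'} is the larger critical point.

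The substantive step is the second-derivative / shape analysis. I would compute
\begin{align}
\frac{\partial^2 u_0}{\partial P_0^2} &= \frac{1}{2\ln 2}\left(\frac{(1-\alpha)b^2}{(1+bP_0)^2}-\frac{a^2}{(1+aP_0)^2}\right),
\end{align}
and observe that the sign of this is governed by comparing $\sqrt{1-\alpha}\,b(1+aP_0)$ with $a(1+bP_0)$. Solving the equality for $P_0$ gives precisely the threshold $\hat P(\alpha)=\dfrac{b\sqrt{1-\alpha}-a}{ab(1-\sqrt{1-\alpha})}$ in \eqref{eq:concavitycond}. One then checks: for $P_0>\hat P(\alpha)$ the function is concave (and for $P_0<\hat P(\alpha)$ convex); in particular on $[\hat P(\alpha),P_0^{\text{max}}]$ the function $u_0(\cdot,\alpha)$ is concave, so its maximizer there is the unique interior critical point $P'(\alpha)$ (or a boundary point, but the formula is written for the generic interior case). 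On the convex piece $[0,\hat P(\alpha)]$ (when this interval is nonempty, i.e. when $b\sqrt{1-\alpha}>a$), the maximum over that subinterval is attained at an endpoint, either $0$ or $\hat P(\alpha)$; since $u_0$ is continuous and $\hat P(\alpha)$ is also the left endpoint of the concave piece, the global maximizer over $[0,P_0^{\text{max}}]$ is therefore either $P_0=0$ or the maximizer of the concave branch. This justifies \eqref{eq:P*}: $P^*(\alpha)=\argmax_{P_0\in\{0,P'(\alpha)\}}u_0(P_0,\alpha)$.

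I would split into the two cases already flagged in the text: (i) $a\ge b$. Then $a\ge b\ge b\sqrt{1-\alpha}$, so $\hat P(\alpha)\le 0$, the convex piece is empty, $u_0(\cdot,\alpha)$ is concave on all of $[0,P_0^{\text{max}}]$, and $P^*(\alpha)=\max\{0,P'(\alpha)\}$, which is the content of \eqref{eq:P*}–\eqref{eq:P'}. (ii) $a<b$. Here for $\alpha$ small enough that $b\sqrt{1-\alpha}>a$, the nonempty convex piece $[0,\hat P(\alpha)]$ appears and the previous paragraph's endpoint argument applies; for $\alpha$ close to $1$ we are back in the concave situation. In both sub-cases the maximizer is in $\{0,P'(\alpha)\}$. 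The main obstacle I anticipate is purely bookkeeping: verifying that the algebraic root of the first-order quadratic matches the stated closed form $\big(X(\alpha)+\sqrt{X(\alpha)^2-Y(\alpha)}\big)/(2\bar\gamma ab)$ (including pinning down the constant hidden in $\bar\gamma$ and checking the discriminant $X(\alpha)^2-Y(\alpha)\ge0$ exactly on the regime where an interior maximizer exists), and confirming that when the quadratic has two positive roots the \emph{larger} one is the maximizer while the smaller is a local min — which follows from the concavity/convexity switch at $\hat P(\alpha)$ lying between them. No deep idea is needed beyond this careful derivative analysis.
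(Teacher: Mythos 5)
Your proposal is correct and follows essentially the same route as the paper's own proof: a second-derivative test identifying $\hat P(\alpha)$ as the convex/concave switching point, the first-order condition on the concave branch yielding the quadratic whose larger root is $P'(\alpha)$ (with $\bar\gamma=\gamma\ln 4=2\gamma\ln 2$, exactly as the paper defines), and an endpoint-plus-continuity argument on the convex piece reducing the global maximizer to $\{0,P'(\alpha)\}$. You merely supply the algebraic details the paper leaves implicit, so no further comparison is needed.
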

\begin{IEEEproof}
Given some strategy $\alpha$ of SU, the best response for PU, i.e., $P^*(\alpha)$ is given as 
\begin{align}
P^*(\alpha) =\argmax_{P_0\in[0, P^{\text{max}}_0 ]} u_0(P_0,\alpha).\label{eq:argmaxP}
\end{align}
Using the second derivative test, it can be shown that $u_0(P_0,\alpha)$ is concave in $P_0$ if $P_0\geq\hat{P}(\alpha)$ and convex otherwise.
For the concave region (i.e., $P_0\in [\hat{P}(\alpha), P^{\text{max}}_0 ]$) and by setting the first derivative of $u_0(\cdot)$ with respect to $P_0$ to zero and solving for $P_0$, we get the expression for $P'(\alpha)$.

For the convex region, the maximum is on the boundary of the interval $[0,\hat{P}(\alpha)]$. Then, from the continuity of $u_0(\cdot)$, the expression for $P^*(\alpha)$ follows and the proof is complete.
\end{IEEEproof}

The following Theorem characterizes the unique NE of the game $\Gc$ for the case $a\geq b$.

\begin{theorem}
For the game $\Gc$, if $a\geq b$, then the unique NE is
\begin{align}
(s_1^*,s_2^*)=
\begin{cases}
\left(P'(0),0\right);\ &\text{if } Q< P'(0)\\
\left(Q,\alpha_Q\right);\ &\text{if } P'(0) \leq Q \leq P'(1)\\
\left(P'(1),1\right);\ &\text{if } P'(1)<Q.
\end{cases}
\label{eq:ne1}
\end{align}
where $\alpha_Q\in[0,1]$ is the time fraction of SU that solves the equation $P'(\alpha)=Q$ and $\bar{\gamma} = \gamma \ln(4)$.
\label{thm:nash1}
\end{theorem}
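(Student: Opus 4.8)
The plan is to exploit the structure established in Lemma~\ref{lemma:u1} for the best response $P^*(\alpha)$ of PU, together with the monotone best response of SU, and then intersect the two rational reaction sets. First I would argue that, under $a\geq b$, the concavity threshold $\hat{P}(\alpha)$ is non-positive, so $u_0(P_0,\alpha)$ is concave on the whole interval $[0,P_0^{\text{max}}]$; indeed, for $a\ge b$ the numerator $b\sqrt{1-\alpha}-a\le b-a\le 0$ while the denominator $ab(1-\sqrt{1-\alpha})\ge 0$, hence $\hat{P}(\alpha)\le 0$. Consequently $P^*(\alpha)=P'(\alpha)$ (the interior stationary point, truncated to $[0,P_0^{\max}]$ when necessary, which I would note occurs only in degenerate parameter ranges and can be absorbed into the statement), and the PU reaction curve is the single-valued function $\alpha\mapsto P'(\alpha)$. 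I would also record that $P'(\alpha)$ is continuous and monotically non-decreasing in $\alpha$ on $[0,1]$ — this follows by inspecting $X(\alpha)$ and $Y(\alpha)$, since $X$ is increasing in $\alpha$ and $Y$ is decreasing in $\alpha$, so both $X(\alpha)$ and $X(\alpha)^2-Y(\alpha)$ increase with $\alpha$; thus $P'(0)\le P'(\alpha)\le P'(1)$.

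Next I would characterize SU's best response. From \eqref{eq:u2}, $u_1(P_0,\alpha)=\alpha\bigl(C(cP_1^{\max}/(1+aP_0))-\beta\bigr)$ is linear in $\alpha$ with slope $C(cP_1^{\max}/(1+aP_0))-\beta$, which is a strictly decreasing function of $P_0$. By the definition of $Q$ in \eqref{eq:Q}, this slope is positive iff $P_0<Q$, zero iff $P_0=Q$, and negative iff $P_0>Q$. Hence SU's best response $T_1(P_0)$ is $\alpha=1$ if $P_0<Q$, $\alpha=0$ if $P_0>Q$, and all of $[0,1]$ if $P_0=Q$. This gives the rational reaction set $D_1$ as the union of the segment $\{P_0<Q,\alpha=1\}$, the segment $\{P_0>Q,\alpha=0\}$, and the vertical segment $\{P_0=Q,\ \alpha\in[0,1]\}$.

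The NE are then exactly the points of $D_0\cap D_1$, which I would find by case analysis on where $Q$ falls relative to the interval $[P'(0),P'(1)]$. If $Q<P'(0)$: on the branch $\alpha=0$ the PU plays $P'(0)>Q$, which is consistent with SU playing $\alpha=0$; the branch $\alpha=1$ would require $P'(1)<Q$, impossible since $P'(1)\ge P'(0)>Q$; and $P_0=Q$ is not on the PU curve since $P'(\alpha)\ge P'(0)>Q$ for all $\alpha$. So the unique NE is $(P'(0),0)$. Symmetrically, if $Q>P'(1)$ the unique NE is $(P'(1),1)$. If $P'(0)\le Q\le P'(1)$: neither pure branch is consistent (the $\alpha=0$ branch needs $P'(0)>Q$, the $\alpha=1$ branch needs $P'(1)<Q$), but by continuity and monotonicity of $P'(\cdot)$ there exists $\alpha_Q\in[0,1]$ with $P'(\alpha_Q)=Q$, and $(Q,\alpha_Q)$ lies in both reaction sets, giving a unique NE. Finally, uniqueness within each case follows because $D_0$ is a function graph and $D_1$ meets it in exactly one point in each regime, as the above argument shows; I would also mention the identity $\bar{\gamma}=\gamma\ln 4$ arises from differentiating $C(x)=\tfrac12\log_2(1+x)$, i.e. $\tfrac{d}{dx}C(x)=\tfrac{1}{2\ln 2\,(1+x)}$, so clearing the $\tfrac{1}{2\ln2}$ factor in the stationarity condition introduces $\gamma\cdot 2\ln 2=\gamma\ln 4$.

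The main obstacle I expect is handling the boundary truncation cleanly: $P'(\alpha)$ as written is the \emph{unconstrained} stationary point, and when it exceeds $P_0^{\max}$ (or when the stationary point is negative, i.e.\ the slope at $0$ is already negative so $P^*(\alpha)=0$) the reaction curve flattens against the boundary, and one must check that the three-case formula in \eqref{eq:ne1} still captures the NE — or that the relevant parameter regime is excluded by standing assumptions. The rest is bookkeeping: verifying concavity via the second-derivative test, verifying monotonicity of $P'(\cdot)$ from the signs of $X'(\alpha)$ and $Y'(\alpha)$, and an intermediate-value argument for the existence of $\alpha_Q$.
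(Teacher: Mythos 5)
Your proposal follows essentially the same route as the paper's proof: establish that $a\geq b$ forces $\hat{P}(\alpha)\leq 0$ so that $P^*(\alpha)=P'(\alpha)$ on all of $[0,P_0^{\max}]$, characterize SU's threshold best response via $Q$, and intersect the two reaction curves in the same three cases, using monotonicity of $P'(\cdot)$ and the solvability of $P'(\alpha)=Q$ exactly when $Q\in[P'(0),P'(1)]$. One small caution: your justification that $X(\alpha)^2-Y(\alpha)$ is increasing because $X$ increases and $Y$ decreases does not follow when $X(\alpha)<0$ (note $X(0)=-\bar{\gamma}(a+b)<0$); the monotonicity of $P'(\alpha)$ is better obtained from the fact that $\partial u_0/\partial P_0$ is pointwise increasing in $\alpha$ together with concavity, though the paper itself simply asserts this fact.
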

\begin{IEEEproof}
We show that the intersection of the best response correspondences of PU and SU are exactly the points in the Theorem. Therefore, no user has incentive to deviate unilaterally from such points and the conditions of Definition \ref{def:nash} are satisfied at these given points. 

From Lemma \ref{lemma:u1}, note that when $a\geq b$, $\hat{P}(\alpha)<0\ \forall \alpha \in [0,1]$ and hence $P_0\geq\hat{P}(\alpha)$ for all $P_0\in[0, P^{\text{max}}_0 ]$ and $P^*(\alpha)= P'(\alpha)$.

The utility of SU $u_1(P_0,\alpha)$ is linear in $\alpha$ given $P_0$ and the slope of $u_1(P_0,\alpha)$ is non-positive for $P_0\geq Q$. Consider the case $Q < P'(0)$ and assume NE is at $\alpha = 0$. Then, the best response for the primary player is $P_0 = P'(0)>Q$ implying that $(P'(0),0)$ is an NE in this case. This proves the first case.

Now assume $\alpha^* = 1$. Then, the best response for the primary is $P_0 = P'(1)$. When $P'(1) < Q$, the best response for SU when $P_0 =P'(1)$ is $\alpha = 1$ and hence $(P'(1),1)$ is an NE in this case.

For the remaining case, i.e., $P'(0)\leq Q\leq P'(0)$, we note that SU is indifferent to the choice of $\alpha$ when PU chooses $P_0=Q$ since the slope of $u_1(Q,\alpha)$ with respect to $\alpha$ is zero in this case. The intersection of the best response sets for the PU and SU is at $\alpha^*=\alpha_Q$. The solution $\alpha_Q$ to the equation $P'(\alpha)=Q$ is given by
\begin{align}
\alpha_Q = \frac{\bar{\gamma}\left[Q(a+b+abQ)+1\right]-a+b}{b(aQ+1)}
\end{align}
The parameter $\alpha_Q$ is in the interval $[0,1]$ if and only if $P'(0) \leq Q \leq P'(1)$ implying that it is the only NE in this case. Finally, note that $P'(\alpha)$ is an increasing function in $\alpha$. Therefore, the relation $P'(0)<P'(1)$ is always valid and we do not need to consider other cases. This concludes the proof.
\end{IEEEproof}

The unique NE for the game $\Gc$ suggests the following. SU's decision depends on the choice of the power level by PU compared to the threshold value $Q$. Only when $P_0<Q$, SU is able to achieve positive utility. 
If the threshold $Q$ is high so that $P^*(1)\leq Q$ (the last case of \eqref{eq:ne1}), then SU transmits all the time and the eavesdropping term in the primary utility function vanishes. When $\alpha = 1$, we have
\begin{align}
P^*(1) = \min\Big\{ P^{\text{max}}_1 ,\Big[\frac{1}{\bar{\gamma}} - \frac{1}{a}\Big]^+\Big\}.
\label{eq:P11}
\end{align}
An example of such a case is when the primary channel gain $a$ is small and the secondary channel gain $c$ is high for a given $\beta$. This can be seen from \eqref{eq:Q} and \eqref{eq:P11}. In this case, interference from PU to SU at the destination is low and SU chooses to transmit for the entire available time. In addition, PU is not affected by the transmission of SU.

On the other hand, when $c$ is small and $a$ is large, we might have $Q<P^*(1)$. In this case, SU achieves zero utility (the first ands second case of \eqref{eq:ne1}). In fact, SU is forced to select $\alpha=0$ in order to avoid negative utility in these cases. 

Now we present a similar result for the case when $a<b$. We start by computing the best response correspondences for each user. It is straight forward to see that for SU, $T_1(P_0) = 1$ if $P_0<Q$, $T_1(P_0) = 0$ if $P_0>Q$ and $T_1(P_0) = \alpha,\ \alpha\in[0,1]$ if $P_0=Q$, similar to the case when $a>b$. For PU, note that $P'(\alpha)$ is an increasing function of $\alpha$. To find $T_0(\alpha)$, the following definition is needed. Let $\tilde{\alpha}$ be defined such that $u_0(P'(\tilde{\alpha}),\tilde{\alpha})= 0$ and $u_0(P'(\alpha),\alpha)>0 \ \forall \alpha >\tilde{\alpha}$. From \eqref{eq:P*}, we have $T_0(\alpha)=0$ if $\alpha\in[0,\tilde{\alpha})$, $T_0(\alpha)=P'(\alpha)$ if $\alpha\in(\tilde{\alpha},1]$, and $T_0(\tilde{\alpha})=\{0,P'(\tilde{\alpha})\}$. Note that if $\tilde{\alpha}<0$ then $P'(\alpha)=P^*(\alpha)$ which is similar to the case when $a\geq b$. If $\tilde{\alpha}>1$, then $u_0(\cdot)\leq 0$ for all $P_0$ and $\alpha$, implying that the decision of PU is $P_0 = 0$.

The following theorem characterizes the Nash equilibria when $a<b$. In the following, $P_0(x)$ denotes the probability that PU uses the (discrete) action $x\in[0,P_0^{\text{max}}]$ in some mixed strategy.
\begin{theorem}
\label{thm:nash2}
For the game $\Gc$, if $a<b$, the following are the only Nash equilibrium points
\begin{align}
(s_0^*,s_1^*)=
\begin{cases}
(0,0);\ &\text{if } Q< 0\\
(0,g_1);\ &\text{if } Q = 0\\
\left(f_2,g_2\right);\ &\text{if }0< Q< P'(\tilde{\alpha})\\
(Q,g_3);\ &\text{if } P'(\tilde{\alpha}) \leq Q \leq P'(1)\\
\left(P'(1),1\right);\ &\text{if } P'(1)<Q.
\end{cases}
\label{eq:ne2}
\end{align}
where $g_1$ is the mixed strategy for SU with an arbitrary probability distribution over $supp(g_1)=[0,\tilde{\alpha}]$, $f_2$ is the mixed strategy for PU with $supp(f_2)=\{0,P'(\tilde{\alpha})\}$ and $P_0(0)$, $P_0(P'(\tilde{\alpha}))$ are the unique solutions to both linear equations
\begin{align}
P_0(0)+ P_0(P'(\tilde{\alpha}))=1,\notag\\
C(c  P^{\text{max}}_1 )P_0(0)  + C\left(\frac{c  P^{\text{max}}_1 }{1+ a P'(\tilde{\alpha})}\right)P_0(P'(\tilde{\alpha}))=\beta,
\label{eq:f}
\end{align}
such that $P_0(0)>0, \ P_0(P'(\tilde{\alpha}))>0$. In addition, $g_2,g_3$ are the mixed strategy for SU such that $supp(g_2)=supp(g_2)=[0,1]$ and ${\mathbb E}[\alpha]=\tilde{\alpha}, {\mathbb E}[\alpha]=\alpha_Q$ , respectively.
\end{theorem}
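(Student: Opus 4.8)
The plan is to compute all Nash equilibria of $\Gc$ by intersecting the best‑response correspondences, which are already in hand just above the theorem: SU's correspondence $T_1$ is the step map equal to $1$ for $P_0<Q$, equal to $0$ for $P_0>Q$, and equal to all of $[0,1]$ at $P_0=Q$; PU's correspondence has $T_0(\alpha)=0$ for $\alpha<\tilde{\alpha}$, $T_0(\alpha)=P'(\alpha)$ for $\alpha>\tilde{\alpha}$, and $T_0(\tilde{\alpha})=\{0,P'(\tilde{\alpha})\}$ (the degenerate ranges $\tilde{\alpha}<0$ and $\tilde{\alpha}>1$ having already been reduced, just before the theorem, to Theorem~\ref{thm:nash1} and to the response $P_0=0$, so I work in the sub‑case $0\le\tilde{\alpha}\le 1$). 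The feature that is new relative to the case $a\ge b$ is that, when $a<b$, $u_0$ may be negative, so PU's best response jumps between the \emph{safe} action $0$ and the interior optimum $P'(\alpha)$; both players' correspondences now carry a discontinuity (SU's at $Q$, PU's at $\tilde{\alpha}$), and when these fail to cross at a common point there is no pure NE, so a mixed one must be produced through the indifference characterization of Proposition~\ref{prop:prop}.

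First I would handle the extreme regimes, where the equilibrium is essentially pure. If $Q\le 0$, then for every feasible power level $\alpha=0$ weakly dominates for SU (strictly when $Q<0$), and substituting $\alpha=0$ into $u_0(P_0,0)=C(aP_0)-C(bP_0)-\gamma P_0<0$ for $P_0>0$ (here $a<b$ is used) forces PU to $P_0=0$; this gives the first two lines of \eqref{eq:ne2}, and for $Q=0$ one additionally notes that $Q=0$ is equivalent to $C(c P^{\text{max}}_1 )=\beta$ by \eqref{eq:Q}, whence $u_1(0,\cdot)\equiv 0$, so SU may randomize over $[0,\tilde{\alpha}]$ — the class $g_1$ — the only binding requirement being ${\mathbb E}[\alpha]\le\tilde{\alpha}$ so that $P_0=0$ stays PU's best response. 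Symmetrically, if $P'(1)<Q$, PU's unconstrained optimum already lies below the threshold, SU strictly prefers $\alpha=1$, and $P'(1)$ is PU's best response to $\alpha=1$ (using $\tilde{\alpha}<1$), giving the pure equilibrium $(P'(1),1)$.

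The heart of the argument is the middle range $0<Q\le P'(1)$. Since $u_0$ is linear in $\alpha$, in any mixed profile PU reacts only to $\bar{\alpha}={\mathbb E}[\alpha]$, so PU's best‑response support is $\{0\}$ when $\bar{\alpha}<\tilde{\alpha}$, the pair $\{0,P'(\tilde{\alpha})\}$ when $\bar{\alpha}=\tilde{\alpha}$, and the singleton $\{P'(\bar{\alpha})\}$ when $\bar{\alpha}>\tilde{\alpha}$; testing each of these against SU's step response at $Q$, together with the consistency requirement that $\bar{\alpha}$ be the mean of SU's equilibrium strategy, leaves exactly two viable structures. For $0<Q<P'(\tilde{\alpha})$, PU must mix over $\{0,P'(\tilde{\alpha})\}$ with $\bar{\alpha}=\tilde{\alpha}$, which forces SU in turn to make PU indifferent; the indifference condition for SU — whose payoff is linear in $\alpha$ but whose rate term is nonlinear in $P_0$ — is ${\mathbb E}_{P_0}\!\big[C\big(c P^{\text{max}}_1 /(1+aP_0)\big)\big]=\beta$, and over the two‑point support $\{0,P'(\tilde{\alpha})\}$ this is precisely the linear system \eqref{eq:f}. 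Using $C(c P^{\text{max}}_1 )>\beta\iff Q>0$ and $C\big(c P^{\text{max}}_1 /(1+aP'(\tilde{\alpha}))\big)<\beta\iff Q<P'(\tilde{\alpha})$ — both read off from \eqref{eq:Q} — one verifies that the unique solution of \eqref{eq:f} has $P_0(0),P_0(P'(\tilde{\alpha}))\in(0,1)$ exactly in this sub‑range, giving $(f_2,g_2)$. In the complementary sub‑range $P'(\tilde{\alpha})\le Q\le P'(1)$ the two‑point mixture is infeasible ($P_0(0)\le 0$), and the only surviving structure is PU pure at $P_0=Q$ with SU indifferent (since $P_0=Q$) and randomizing to mean $\alpha_Q$, the solution of $P'(\alpha)=Q$; monotonicity of $P'$ yields $\tilde{\alpha}\le\alpha_Q\le 1$, which is exactly what makes $P_0=Q=P'(\alpha_Q)$ a genuine best response for PU, giving $(Q,g_3)$.

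I expect the main obstacle to be completeness and uniqueness rather than existence: one must argue that the three candidate supports for PU exhaust all mixed profiles, rule out every other profile in each regime (for instance PU pure at some $P'(\bar{\alpha})$ off the threshold, or PU randomizing while SU plays a pure action), and glue the five branches of \eqref{eq:ne2} together across the four boundary values of $Q$ — where a mixing probability hits $0$ or $1$, or where $\bar{\alpha}$ coincides with $\tilde{\alpha}$ — with neither overlap nor gap. In particular, the bookkeeping that the feasibility window for the probabilities in \eqref{eq:f} coincides exactly with $0<Q<P'(\tilde{\alpha})$ is the step that ties the algebra back to the threshold parameter $Q$.
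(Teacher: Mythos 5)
Your proposal is correct and follows essentially the same route as the paper: intersect the best-response correspondences to get the pure equilibria in the extreme regimes, and use the indifference characterization (Proposition~\ref{prop:prop}) to pin down PU's two-point mixture over $\{0,P'(\tilde{\alpha})\}$ and SU's mean-$\tilde{\alpha}$ (resp.\ mean-$\alpha_Q$) randomizations in the middle regimes, with the linear system \eqref{eq:f} arising as SU's indifference condition. Your explicit check that the solution of \eqref{eq:f} lies in $(0,1)$ precisely when $0<Q<P'(\tilde{\alpha})$, via $C(cP^{\text{max}}_1)>\beta\iff Q>0$ and $C\bigl(cP^{\text{max}}_1/(1+aP'(\tilde{\alpha}))\bigr)<\beta\iff Q<P'(\tilde{\alpha})$, is a useful detail the paper only asserts.
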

\begin{proof}
In the first case of \eqref{eq:ne2}, $u_1(P_0,\alpha)=0\ \forall \alpha>0$. Then the best response of SU is $\alpha^*=0\ \forall P_0$. Since $a<b$, then $u_0(P_0,0)<0\ \forall P_0>0$ implying that $P_0^*=0$. This proves the first case. 

For the second case of \eqref{eq:ne2}, the intersection of the best response correspondences of the two players is $(0,\alpha),\alpha\in [0,\tilde{\alpha}]$, and hence we have an infinite number of pure strategy Nash equilibria. For any MSNE $(f',g')$, the expected value of the slope of $u_1(\cdot)$ equals zero implying that $supp(f')=\{0\}$. Hence, $g'$ can have any distribution on $[0,\tilde{\alpha}]$. Note that all equilibria in this case have the property that ${\mathbb E}[u_0(\cdot)]={\mathbb E}[u_1(\cdot)]=0$.

For the third case, the intersection of the best response sets is empty and thus there is no pure strategy Nash equilibrium. To find the support of the mixed strategy of PU, note that $P_0\in\{(0,P'(\tilde{\alpha}))\}\cup \{P_0>P'(1)\}$ is never a best response to any strategy of SU and hence it can be discarded from the support. Using properties of MSNE from Section \ref{sec:gamebasics} and the fact that there exists at least one mixed strategy Nash equilibrium, then for some $\alpha'\in[0,1]$, there exists some mixed primary strategy $f'$ such that $u_0(P_0,\alpha')=K\ \forall P_0\in supp(f')$, where $K$ is some constant. Observing the best response set of PU, it can be seen that $\alpha'=\tilde{\alpha}$ is the only SU strategy satisfying this condition if the support of SU is singleton. In addition, $supp(f')=\{0,P'(\tilde{\alpha})\}$ and $K=0$. To find $f'$, SU must have no incentive to deviate from $\alpha=\tilde{\alpha}$, i.e., ${\mathbb E}[u_1(f',\tilde{\alpha})]=0$. It can also be seen that for any distribution $g'$ of $\alpha$ on the interval $[0,1]$ such that ${\mathbb E}[\alpha]=\tilde{\alpha}$, $(f',g')$ is an MSNE. This imply the third case in \eqref{eq:ne2}.

For the fourth case, the intersection of $T_0(\cdot)$ and $T_1(\cdot)$ implies the unique pure NE point $(Q,\alpha_Q)$. It can also be seen that if SU randomizes its action over the interval $[0,1]$ such that ${\mathbb E}[\alpha]=\alpha_Q$, then best response of PU is $P_0=Q$. Any other (mixed) strategy for SU will not result in a NE.

Finally, when $Q>P'(1)$, then $\alpha^*=1$ for all $P_0$ in the rational reaction set of PU. In addition, when $P_0 = P'(1)$, no player has incentive to deviate unilaterally. This concludes the proof.
\end{proof}

We now discuss interesting properties of the Nash equilibria of $\Gc$ that motivates our investigation of the Stackelberg game and lead to the main result of the paper. While the following discussion holds for both considered scenarios of channel conditions, we only focus on the scenario $a\geq b$ for brevity.

We observe that, in the cases where $Q\leq P'(1)$, the NE point is inefficient: there may be other operating points where at least one player achieves higher utility while the utility of the others is not decreased. For example, assume $0 < Q$. In the first case where $Q < P'(0)$, if PU chooses a power level less than but arbitrarily close to $Q$, then SU is willing to transmit all time, i.e., chooses $\alpha=1$. In this case, SU achieves a strictly positive utility rather than zero utility achieved at the NE. In addition, since the eavesdropping term in $u_0(\cdot)$ vanishes in this case, PU can achieve a better utility if $u_0(Q,1)>u_0(P'(0),0)$. The same argument is valid for the second case where $P'(0) \leq Q \leq P'(1)$.

To elaborate on our observation, consider the following numerical example. Let $a=2.5, b=1, c=3.5,  P^{\text{max}}_1  =  P^{\text{max}}_0  = 1, \beta=1, \bar{\gamma}=1$. According to \eqref{eq:Q} and \eqref{eq:P'}, these values imply that $Q < P'(0)$. In this case, the utility of PU at the NE is $u_0(P'(0),0) = 0.0211$ while its utility at $P_0 =Q$ and $\alpha=1$ is $u_0(Q,1) = 0.0631$ which is three times better. When $c$ is changed to $c = 5$ while fixing the rest of parameters, we have $P'(0) \leq Q \leq P'(1)$. Here, $\alpha_Q = 0.3667$, $u_0(Q,\alpha_Q)=0.0681$ and $u_0(Q,1)=0.1761$, which is more than two times  better than the primary utility achieved at the NE point. The utility of PU is sketched in Figure \ref{fig:numerical} for the cases $\alpha=0$, $\alpha=\alpha_Q$ and $\alpha=1$ when $c=5$. Each curve is parameterized by a value for $\alpha$. The points maximizing each case are marked in addition to the the suggested operating point $u_0(Q,1)$.
\begin{figure}[thb]
    \centering
    \psfrag{x1}[c]{\tiny $u_0(P'(0),0)$}
    \psfrag{x2}[c]{\tiny $u_0(Q,\alpha_Q)$}
    \psfrag{x3}[c]{\tiny $u_0(Q,1)$}
    \psfrag{x4}[c]{\tiny $u_0(P'(1),1)$}
    \includegraphics[width=0.75\columnwidth]{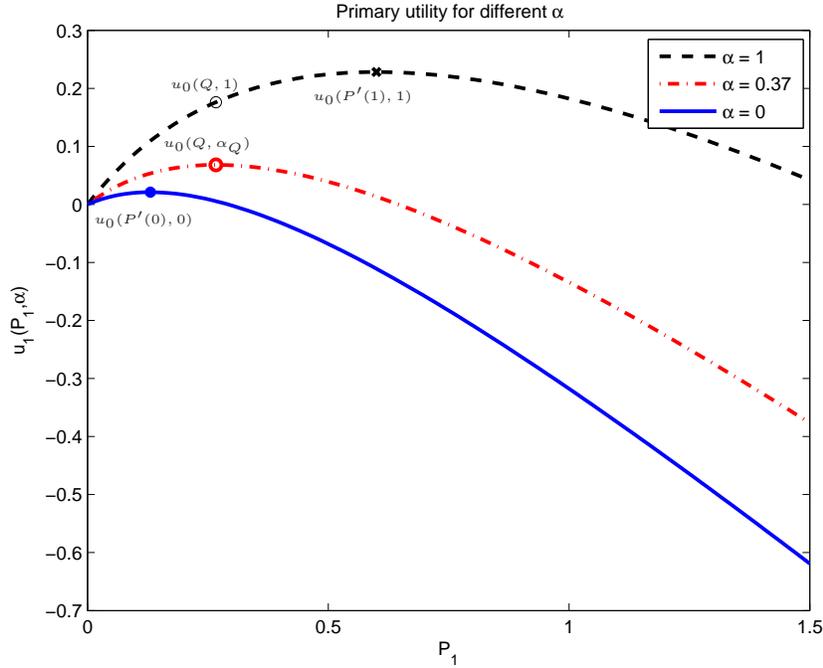}
    \caption{Primary utility for the case $P'(0) \leq Q \leq P'(1)$. Inefficient NE marked on the middle curve and suggested operating point circled on the upper curve.}
    \label{fig:numerical}
\end{figure}

It is important to note that the aforementioned operating points are not equilibrium points of the non-cooperative strategic game. For instance, in the case $Q < P'(0)$, if SU chooses $\alpha=1$, PU can take advantage of this choice and select $P_0=P'(1)$ which will cause SU to achieve negative utility. Therefore, without communication and contracts between players, there is no guarantee that both players will play the strategy profile $(Q-\epsilon,1)$ for an arbitrarily small $\epsilon>0$. However, if both players agree to play the game with some order and not take decisions simultaneously, better equilibrium points can be reached. In the following section, we formulate a leader-follower game in which the inefficient NE points of $\Gc$ are alleviated.

\section{Cognitive Eavesdropper Stackelberg game}
\label{sec:stackelberg}
In this section, we show the existence of a Stackelberg strategy under the leadership of PU that results in better payoff values for both players in $\Gc$ compared to the payoffs achieved at the NE. In a Nash game, each players choose their strategies independently of the actual choice of other players. This property of a strategic game can be viewed as if players decide their choices simultaneously. Alternatively, it can be viewed as a sequential decision process where each player has no information about the decisions of other players. 

On the other hand, in a Stackelberg game, the leader of the game chooses a strategy first and then announces it to the other players in the game (followers). Then the followers react to the strategy of the leader to maximize their own utilities. This leader-follower scenario can model the situation when a player has the power to enforce other players to be followers. In addition, a rational player is willing to play the Stackelberg game as a follower if this implies a better utility than that achieved at the NE of the game. If all players in the game achieve higher utility in the Stackelberg game with the leadership of some player compared to that achieved at the NE of the game, then this SE is said to dominate the NE according to Definition \ref{def:dominance}. 

For our game $\Gc$, we have two possible Stackelberg games: in the first, PU is the leader while in the second, SU is the leader. Here, we show that an SE with PU as the leader (denoted by SEP) dominates the NE in Section \ref{sec:nash}. Moreover, we show that any SE with SU as the leader (denoted by SES) can not dominate the NE of $\Gc$. This implies that SU is willing to be a follower in a Stackelberg game in order to achieve better utility values.

To check the existence of an SE for our game, we start by computing $D_2$, the rational reaction set for SU. It can be seen that
\begin{align}
D_1=\{(P_0,0):P_0 > Q\} \cup \{(P_0,1) : P_0 < Q\}\notag\cup \{(Q,\alpha) :\alpha\in[0,1]\}.
\label{eq:D2}
\end{align}
The following two lemmas establish the existence of an SEP and prove its dominance with respect to the NE for the game $\Gc$, for the two different cases of channel conditions considered in Section \ref{sec:nash}
\begin{lemma}
For the game $\Gc$ with $a\geq b$ and for any given $\epsilon>0$, if the channel gains $a,b,c$ and cost parameters $\gamma,\beta$ and $ P^{\text{max}}_1 , P^{\text{max}}_1 $ are finite, then there exists an $\epsilon$-SEP. Moreover, if $\epsilon$ is sufficiently small, then the $\epsilon$-SEP dominates the NE of $\Gc$.
\label{lemma:sep1}
\end{lemma}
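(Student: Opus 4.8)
The plan is to handle the two assertions of the lemma separately: existence of an $\epsilon$-SEP is a soft consequence of finiteness, while dominance requires pinning down the leader's Stackelberg value and exhibiting an explicit near‑optimal leader move. First I would compute the leader's \emph{security level} $\tilde u_0(P_0):=\inf_{s_1\in D_1(P_0)}u_0(P_0,s_1)$ from the reaction set $D_1$ in \eqref{eq:D2}: for $P_0<Q$ the follower's unique best response is $\alpha=1$, for $P_0>Q$ it is $\alpha=0$, and at $P_0=Q$ the infimum over $\alpha\in[0,1]$ is attained at $\alpha=0$ since $u_0$ is nondecreasing in $\alpha$. Hence $\tilde u_0(P_0)=C(aP_0)-\gamma P_0$ on $[0,Q)$ and $\tilde u_0(P_0)=C(aP_0)-C(bP_0)-\gamma P_0$ on $[Q, P^{\text{max}}_0]$, a function with a downward jump of size $C(bQ)$ at $P_0=Q$ when $Q>0$. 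Since $\tilde u_0(0)=0$ and $\tilde u_0(P_0)\le C(aP_0)\le C(a P^{\text{max}}_0)<\infty$ (all the stated parameters being finite), the Stackelberg value $\bar u_0=\sup_{P_0}\tilde u_0(P_0)$ is finite, so by the result quoted after Definition~\ref{def:ubar} an $\epsilon$-SEP exists for every $\epsilon>0$; this settles the first assertion.

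For dominance I would locate $\bar u_0$ in each of the three regimes of Theorem~\ref{thm:nash1}, using that (for $a\ge b$) $u_0(\cdot,0)$ is concave with interior maximizer $P'(0)$ ($\hat P(\alpha)<0$ in Lemma~\ref{lemma:u1}) and that $P_0\mapsto C(aP_0)-\gamma P_0$ is increasing on $[0,P'(1)]$. If $P'(1)<Q$, the left branch of $\tilde u_0$ peaks at $P'(1)<Q$ and dominates the right branch, so $\bar u_0=u_0(P'(1),1)$ is attained at $P_0=P'(1)$ where the follower plays $\alpha=1$, giving an exact SEP $(P'(1),1)$ that equals the NE. If $P'(0)\le Q\le P'(1)$, the left branch is increasing on $[0,Q)$, so its supremum $u_0(Q,1)$ is not attained and exceeds the right branch's maximum $u_0(Q,0)$ by $C(bQ)$; thus $\bar u_0=u_0(Q,1)$. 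If $Q<P'(0)$, then $\bar u_0=\max\{u_0(Q,1),u_0(P'(0),0)\}$: either the first term wins and is again a supremum not attained, or $\bar u_0=u_0(P'(0),0)$ is attained and the exact SEP $(P'(0),0)$ equals the NE (as it also does trivially when $Q\le 0$, where $\tilde u_0$ reduces to the single concave branch).

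In the two ``not-attained'' subcases I would construct the $\epsilon$-SEP by undercutting $Q$: fix $\epsilon>0$, use continuity of $P_0\mapsto u_0(P_0,1)$ at $Q$ to pick $\delta>0$ with $u_0(Q-\delta,1)\ge\bar u_0-\epsilon$, and set $\bar s_{0\epsilon}=Q-\delta$. Since $Q-\delta<Q$, the follower's unique best response is $\alpha=1$, so $\inf_{s_1\in D_1(Q-\delta)}u_0(Q-\delta,s_1)=u_0(Q-\delta,1)\ge\bar u_0-\epsilon$, confirming this is an $\epsilon$-SEP. Its follower payoff is $u_1(Q-\delta,1)=C\big(c P^{\text{max}}_1/(1+a(Q-\delta))\big)-\beta$, which is strictly positive precisely because $Q-\delta<Q$ and, by the defining property of $Q$ in \eqref{eq:Q}, this quantity is positive exactly when $P_0<Q$. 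The Nash follower payoff, on the other hand, is $0$ in both subcases: at $(P'(0),0)$ because $\alpha^*=0$, and at $(Q,\alpha_Q)$ because $u_1(Q,\alpha_Q)=\alpha_Q\cdot 0=0$, the slope term of $u_1$ vanishing at $P_0=Q$. Hence $u_1(\bar s_{0\epsilon},1)>0=u_1(s_0^*,s_1^*)$, which is the dominance inequality \eqref{eq:dominance}; in the attained subcases the exact SEP coincides with the NE and dominance holds with equality, so the second assertion holds in every case.

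The step I expect to be the crux is locating $\bar u_0$: because the follower switches discontinuously from $\alpha=1$ to $\alpha=0$ as $P_0$ crosses $Q$, the security level $\tilde u_0$ jumps downward there, so in the interesting regimes the Stackelberg value is a supremum that is \emph{not} attained — which is exactly why the statement can only guarantee an $\epsilon$-SEP rather than an exact SEP, and why one must verify in each parameter case that the jump never lets the right branch overtake the left-hand limit at $Q$.
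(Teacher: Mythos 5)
Your proposal is correct and follows essentially the same route as the paper's proof: compute the leader's security level from the follower's reaction set $D_1$, observe it is finite to get existence, undercut the threshold $Q$ to build the $\epsilon$-SEP in the regimes where the supremum is not attained, and compare follower payoffs (strictly positive at $Q-\delta$ versus zero at the NE). Your treatment is in fact slightly more careful than the paper's at the point $P_0=Q$, where you note the infimum over $D_1(Q)=[0,1]$ sits at $\alpha=0$ and hence the Stackelberg value is a non-attained supremum, which is exactly why only an $\epsilon$-SEP is claimed.
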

\begin{IEEEproof}
For the existence part, it suffices to show finiteness of $\bar{u}_0$ as in \cite{Basar:Dynamic95}.  From Definition \ref{def:ubar} and from \eqref{eq:D2}, the utility of the PU for a Stackelberg game $\Gc$ under its leadership can be calculated as
\begin{align}
\bar{u}_1 =
\begin{cases}
u_0(P'(0),0); &\text{if } Q \leq 0\\
\max\{u_0(Q,1),u_0(P'(0),0)\}; &\text{if }0 < Q< P'(0)\\
u_0(Q,1); &\text{if } P'(0) \leq Q \leq P'(1)\\
u_0(P'(1),1); &\text{if } P'(1)<Q.
\end{cases}
\end{align}
If $a,b,c$ and $\gamma, \beta,  P^{\text{max}}_0 ,  P^{\text{max}}_1 $ are finite, then $Q$ and $P'(\alpha)$ are finite for all $\alpha\in[0,1]$. Then, $\bar{u}_0$ is finite in all cases and existence follows from \cite[Property 4.2]{Basar:Dynamic95}.
Now, fix some $\epsilon>0$ and consider the following strategy for PU.
\begin{align}
s_{0\epsilon}^1 =
\begin{cases}
P'(0); &\text{if } Q \leq 0\\
\underset{P_0\in\{Q-\epsilon,P'(0)\}}{\operatorname{arg\,max}} u_0(P_0,D_1(P_0)); &\text{if }0<Q < P'(0)\\
Q-\epsilon; &\text{if } P'(0) \leq Q \leq P'(1)\\
P'(1); &\text{if } P'(1) < Q.
\end{cases}
\label{eq:s1sep1}
\end{align}
For the first and last cases in \eqref{eq:s1sep1}, it can be seen that $u_0(s^1_{0\epsilon},D_1(s^1_{0\epsilon}))=\bar{u}_1$. In addition, for the other two cases, since $u_0(\cdot)$ is uniformly continuous in $P_0$, it can be seen that we have $u_0(s^1_{0\epsilon},D_1(s^1_{0\epsilon}))\geq\bar{u}_0 -\epsilon$ which holds for all cases of \eqref{eq:s1sep1}. This implies that $s^1_{0\epsilon}$ is in fact an $\epsilon$-SEP by definition. Finally, to show dominance of the above SEP over NE of $\Gc$, note that $\bar{u}_0\geq u_0(s_0^*,s_1^*)$. Then, for $\epsilon$ sufficiently small, $u_0(s^1_{0\epsilon},D_1(s^1_{0\epsilon}))$ is sufficiently close to $\bar{u}_0$. For the utility of SU, it can be seen that $u_1(\cdot)$ is the same as in the NE \eqref{eq:ne1} for the first and last cases of \eqref{eq:s1sep1} and $u_1(s^1_{0\epsilon},D_1(s^1_{0\epsilon}))\geq u_1(s_0^*,s_1^*)$ concluding the proof.
\end{IEEEproof}
The proof of the following lemma follows a similar argument to the proof of Lemma \ref{lemma:sep1} and is omitted for brevity.
\begin{lemma}
\label{lemma:sep2}
For the game $\Gc$ with $a<b$ and for any $\epsilon>0$, the following strategy for the PU is an $\epsilon$-SEP.
\begin{align}
s_{0\epsilon}^2 =
\begin{cases}
0; &\text{if } Q \leq 0\\
Q-\epsilon; &\text{if }0<Q \leq P'(1)\\
P'(1); &\text{if } P'(1) < Q.
\end{cases}
\label{eq:s1sep2}
\end{align}
Moreover, $s_{0\epsilon}^2$ dominates the NE of $\Gc$ for small enough $\epsilon$.
\end{lemma}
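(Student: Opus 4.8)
The plan is to follow the template of the proof of Lemma~\ref{lemma:sep1}, isolating the single structural change forced by $a<b$. Since now $C(aP_0)<C(bP_0)$, we have $u_0(P_0,0)=C(aP_0)-C(bP_0)-\gamma P_0<0$ for every $P_0>0$, so the leader can never secure a nonnegative payoff while the follower plays $\alpha=0$; its only worthwhile moves are to undercut the threshold $Q$ (which forces $\alpha=1$ and annihilates the eavesdropping term) or to shut down with $P_0=0$. This is exactly why the ``$P'(0)$'' branch of \eqref{eq:s1sep1} is replaced by ``$0$'' in \eqref{eq:s1sep2}.

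First I would establish existence of an $\epsilon$-SEP by showing $\bar u_0<\infty$ and invoking \cite[Property~4.2]{Basar:Dynamic95}. Using the rational reaction set $D_1$ from \eqref{eq:D2}, and the fact that $u_0(P_0,\cdot)$ is nondecreasing in $\alpha$ (its $\alpha$-derivative is $C(bP_0)\ge0$), the worst-case follower response is $\inf_{s_1\in D_1(P_0)}u_0(P_0,s_1)=u_0(P_0,1)$ for $P_0<Q$, $=u_0(Q,0)$ for $P_0=Q$, and $=u_0(P_0,0)$ for $P_0>Q$. The last two are negative when $Q>0$, while at $P_0=0$ (the relevant point when $Q\le0$) the value is $0$. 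Since $\hat P(1)=-1/b<0$, the map $u_0(\cdot,1)$ is concave on $[0,P^{\text{max}}_0]$ with $u_0(0,1)=0$ and maximizer $P'(1)$; hence $\bar u_0=\max\{0,\ \sup_{0\le P_0<Q}u_0(P_0,1)\}$, which equals $u_0(Q,1)$ when $0<Q\le P'(1)$, equals $u_0(P'(1),1)$ when $P'(1)<Q$, and equals $0$ when $Q\le0$. All of these are finite once the channel gains and cost parameters are finite (so that $Q$ and $P'(\cdot)$ are finite), which yields existence and at the same time exhibits \eqref{eq:s1sep2} as the (near-)maximizing leader strategy.

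Next I would verify $\epsilon$-optimality and dominance. In the branches $Q\le0$ and $P'(1)<Q$ the maximum above is attained at $P_0=0$ and $P_0=P'(1)$, so $u_0(s_{0\epsilon}^2,D_1(s_{0\epsilon}^2))=\bar u_0$ exactly; in the branch $0<Q\le P'(1)$, $P_0=Q-\epsilon<Q$ forces the follower to $\alpha=1$ and, by uniform continuity (indeed local Lipschitzness) of $u_0(\cdot,1)$, gives $u_0(Q-\epsilon,1)\ge\bar u_0-\epsilon$ for $\epsilon$ small, so Definition~\ref{def:ubar} is met. For dominance, Definition~\ref{def:dominance} requires only that SU's payoff at the SEP be at least its NE payoff (the leader's inequality $\bar u_0\ge u_0(s_0^*,s_1^*)$ is automatic). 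From \eqref{eq:Q}, $C\big(cP^{\text{max}}_1/(1+aP_0)\big)-\beta$ vanishes at $P_0=Q$ and is $\le0$ for $P_0\ge Q$; together with the indifference conditions behind the mixed equilibria (e.g.\ \eqref{eq:f}) this shows SU's NE payoff equals $0$ in every case of \eqref{eq:ne2} except the last. When $Q\le0$ the SEP plays $P_0=0$ and SU's best reply yields payoff $0$, matching the NE; when $0<Q\le P'(1)$ --- which, using $\tilde\alpha\le1$ and the monotonicity of $P'(\cdot)$, covers both the $0<Q<P'(\tilde\alpha)$ and $P'(\tilde\alpha)\le Q\le P'(1)$ cases of \eqref{eq:ne2} --- the SEP plays $P_0=Q-\epsilon<Q$ so that SU's payoff $C\big(cP^{\text{max}}_1/(1+a(Q-\epsilon))\big)-\beta$ is strictly positive, a strict improvement over $0$; and when $P'(1)<Q$ the SEP strategy $P'(1)$ coincides with the NE strategy. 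Hence $u_1$ at the SEP is at least its NE value in every case, which is dominance for $\epsilon$ sufficiently small.

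The conceptual content here is light, so the main obstacle I anticipate is the case bookkeeping: checking that the branch thresholds in \eqref{eq:s1sep2} line up with those in \eqref{eq:ne2}, and that the degenerate sub-cases flagged before Theorem~\ref{thm:nash2} ($\tilde\alpha<0$, which collapses to the $a\ge b$ analysis; $\tilde\alpha>1$, which forces $P'(1)=0$ and hence $s_0=0$; and the $P^{\text{max}}_0$-cap hidden in the definition of $P'(1)$) remain consistent with \eqref{eq:s1sep2} and do not break the dominance comparison. I would also make the ``uniform continuity $\Rightarrow$ within $\epsilon$'' step precise by using an explicit Lipschitz bound on $u_0(\cdot,1)$ over the compact interval $[0,P^{\text{max}}_0]$, absorbing the Lipschitz constant into a rescaled $\epsilon$ if needed.
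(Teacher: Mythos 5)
Your proof is correct and is precisely the ``similar argument to the proof of Lemma~\ref{lemma:sep1}'' that the paper invokes when it omits this proof: you compute the worst-case follower response from $D_1$, observe that $a<b$ forces $u_0(P_0,0)<0$ for all $P_0>0$ so the $P'(0)$ branch of \eqref{eq:s1sep1} collapses to $0$, and check dominance by comparing $u_1$ at the $\epsilon$-SEP against the zero (or identical) NE payoffs of Theorem~\ref{thm:nash2}. The edge cases you flag ($\tilde\alpha\notin[0,1]$, $P'(1)=0$, the $P_0^{\text{max}}$ cap) are handled consistently, so nothing is missing.
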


By Definition \ref{def:dominance}, since the SEP in Lemma \ref{lemma:sep1} and Lemma \ref{lemma:sep2} dominates NE of $\Gc$, then the SU will prefer to be the follower in a Stackelberg game under leadership of the primary than to play Nash. At the NE of $\Gc$, SU achieves zero utility for $Q\leq  P'(1)$. However, at the SEP, SU achieves a strictly positive utility value in the third case in \eqref{eq:s1sep1} and non-negative utility (according to channel conditions) in the second case.

Nevertheless, SU may prefer to play a Stackelberg game under its own leadership and not to follow PU to achieve better utility. The following Lemma, however, shows that for the game $\Gc$, no SES dominates the NE. This result shows that {\it PU can in fact enforce SU to be a follower in a Stackelberg game.}

\begin{lemma}
For the game $\Gc$, there exists no SE under the leadership of SU that dominates the NE.
\label{lemma:ses}
\end{lemma}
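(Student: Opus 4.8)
The plan is to argue directly from Definition~\ref{def:dominance}: an SES dominates the NE only if the follower---who is now PU---achieves at least its NE utility under PU's rational reaction, while SU (the leader) also gets at least its NE utility. I would first compute $D_0$, the rational reaction set of PU, from Lemma~\ref{lemma:u1} and the definition of $\tilde\alpha$. For a fixed leader action $\alpha$, PU best-responds with $P^*(\alpha)\in\{0,P'(\alpha)\}$, and crucially $u_0(P^*(\alpha),\alpha)\ge 0$ always (since $P_0=0$ yields $u_0=0$). So the leader SU, choosing $\alpha$ to maximize $\inf_{P_0\in D_0(\alpha)}u_1(P_0,\alpha)$, faces the following tradeoff: increasing $\alpha$ raises the multiplicative factor $\alpha$ in $u_1$, but also (weakly) increases $P'(\alpha)$, hence the interference $aP_0$ seen by SU, hence decreases the bracket $C(cP^{\max}_1/(1+aP_0))-\beta$.

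Second, I would split on the sign of $Q$ and compare with the NE cases of Theorems~\ref{thm:nash1} and \ref{thm:nash2}. The key observation is that at \emph{every} NE of $\Gc$, SU's utility is exactly zero whenever $Q\le P'(1)$, and equals $u_1(P'(1),1)>0$ only when $P'(1)<Q$. In the regime $P'(1)<Q$: here $P_0<Q$ for every $P_0$ PU would ever play (PU's reaction never exceeds $P'(1)$, and $P'(1)<Q$), so SU's utility is increasing in $\alpha$ and the leader picks $\alpha=1$, giving back exactly the NE outcome $(P'(1),1)$---so the SES does not \emph{strictly} improve and in particular cannot dominate in a way that makes SU prefer leadership; more importantly PU also just gets its NE utility, so this SES coincides with NE and the statement (no SES dominates, i.e. strictly improves on, the NE) is understood in the sense that SU gains nothing. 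In the complementary regime $Q\le P'(1)$: SU's NE utility is $0$, which SU can always match by playing $\alpha=0$; the question is whether SU can do strictly better as leader. If SU commits to $\alpha>\tilde\alpha$ (so PU's reaction is $P'(\alpha)>0$), then since $P'(\alpha)\ge P'(\tilde\alpha)$ and one checks $P'(\alpha)\ge Q$ cannot be avoided in the relevant sub-case, PU's response drives $P_0$ to a level $\ge Q$, making the bracket in $u_1$ non-positive, hence $u_1\le 0$. If instead $\alpha\le\tilde\alpha$, PU's \emph{infimum}-response includes $P_0=P'(\tilde\alpha)$ (or $P_0$ large), again giving $u_1\le$ its NE value because of the $\inf$ in Definition~\ref{def:stackelberg}. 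Thus in all cases $\inf_{P_0\in D_0(\bar\alpha)}u_1(P_0,\bar\alpha)\le u_1(s_0^*,s_1^*)$, so no SES strictly dominates.

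Third, I would handle the $a<b$ case in parallel, using the $D_0$ description already given before Theorem~\ref{thm:nash2} ($T_0(\alpha)=0$ for $\alpha<\tilde\alpha$, $T_0(\alpha)=P'(\alpha)$ for $\alpha>\tilde\alpha$, $T_0(\tilde\alpha)=\{0,P'(\tilde\alpha)\}$). Here SU's NE utility is $0$ except in the last case $P'(1)<Q$. The same argument applies: because $T_0(\tilde\alpha)$ is two-valued and the Stackelberg utility for the leader uses $\inf$ over the reaction set, at the only candidate breakpoint $\alpha=\tilde\alpha$ the adversarial choice $P_0=P'(\tilde\alpha)$ pins $u_1$ down to its NE value, and for $\alpha\neq\tilde\alpha$ the monotonicity of $u_1$ in $P_0$ together with $P'(\alpha)$ increasing gives nothing better than $\alpha=0$ or $\alpha=1$, i.e. nothing better than the NE.

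\textbf{Main obstacle.} The delicate point is the role of the $\inf$ in Definition~\ref{def:stackelberg}: at the threshold $\alpha=\tilde\alpha$ the follower PU is indifferent between $P_0=0$ and $P_0=P'(\tilde\alpha)$, so the leader SU must evaluate its payoff at the \emph{worse} of the two for SU. I expect the crux of the proof to be verifying that this worse value never exceeds SU's NE payoff---equivalently, that SU cannot exploit a commitment to extract positive utility in any regime where its NE utility is zero---and that in the one regime where SU's NE utility is positive ($P'(1)<Q$), the Stackelberg solution under SU's leadership simply reproduces the NE and hence does not dominate it in the strict sense of Definition~\ref{def:dominance}. Checking the boundary sub-cases (especially ensuring $P'(\alpha)\ge Q$ whenever SU would want $\alpha>\tilde\alpha$ in the $0<Q\le P'(1)$ regime) is the routine-but-necessary computation underpinning this.
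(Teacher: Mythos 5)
Your proof targets the wrong inequality. By Definition~\ref{def:dominance}, an SE under SU's leadership dominates the NE precisely when the \emph{follower} (here PU) satisfies $u_0(\bar s_0,\bar s_1)\geq u_0(s_0^*,s_1^*)$; the leader's utility is automatically no worse than at the NE, so dominance is entirely a statement about $u_0$. You state this correctly in your opening sentence, but the argument you then develop establishes (or tries to establish) $\inf_{P_0\in D_0(\bar\alpha)}u_1(P_0,\bar\alpha)\le u_1(s_0^*,s_1^*)$, i.e.\ a bound on the \emph{leader's} payoff. Even if true, that would not prove non-dominance; it would only say SU has no strict incentive to lead, which is a different claim --- and, as it turns out, a false one. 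The paper's proof instead computes $D_0$, identifies the SES $\bar\alpha$, and shows case by case that $u_0(\bar s_0,\bar s_1)\le u_0(s_0^*,s_1^*)$, with a strict loss for PU whenever the SES differs from the NE.

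Moreover, the claim you actually argue for fails in the key regime. Take $a\ge b$ and $P'(0)\le Q\le P'(1)$: the NE is $(Q,\alpha_Q)$ and $u_1(Q,\alpha_Q)=0$ because the bracket $C(c P^{\text{max}}_1/(1+aQ))-\beta$ vanishes at $P_0=Q$. As leader, SU can commit to any $\bar\alpha<\alpha_Q$; since $P'(\cdot)$ is increasing, PU's unique reaction $P'(\bar\alpha)$ is then strictly below $Q$, the bracket becomes strictly positive, and $u_1=\bar\alpha\cdot(\text{positive})>0$. So your assertion that ``$P'(\alpha)\ge Q$ cannot be avoided'' is wrong, and SU does strictly gain by leading here. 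The content of the lemma is precisely that this gain is obtained by pushing PU down to $P'(\bar\alpha)$ with $\bar\alpha\le\alpha_Q$, which strictly lowers $u_0$ below its NE value --- hence no dominance. A similar issue appears in your $a<b$ discussion: for $\alpha<\tilde\alpha$ the reaction set $D_0(\alpha)$ is the singleton $\{0\}$, not a set containing $P'(\tilde\alpha)$, so the infimum gives $u_1(0,\alpha)=\alpha(C(c P^{\text{max}}_1)-\beta)>0$ when $Q>0$, again a strict gain for SU at PU's expense (the paper exhibits an $\epsilon$-SES at $\tilde\alpha-\epsilon$ with $u_0=0$). The indifference of PU at $\tilde\alpha$ that you flag as the main obstacle is a minor point; the crux is comparing $u_0$, not $u_1$, across the NE and the SES.
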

\begin{IEEEproof}
Consider the scenario $a\geq b$. We start by computing the rational reaction set $D_0$. It can be easily seen that 
\begin{align}
D_0=\{(P_0,\alpha):P_0=P'(\alpha), \alpha\in[0,1]\}.
\end{align}
Now we check the SES and compare it to the NE point of $\Gc$. The SES is given by
\begin{align}
\bar{\alpha} &= \argmax_{(s_0,s_1)\in D_0} u_1(s_0,s_1)\notag\\
&= \argmax_{\alpha\in[0,1]} \alpha\ (C(\frac{c  P^{\text{max}}_1 }{1+aP'(\alpha)})-\beta). \label{eq:alphases}
\end{align}
We start by comparing to the last case in \eqref{eq:ne1}. Suppose the maximizer of \eqref{eq:alphases} is $\bar{s}_1=\bar{\alpha}=1$. Then, $\bar{s_0}=P'(1)$ and we have $u_0(s_0^*,s_1^*)=u_0(\bar{s}_0,\bar{s}_1)$. If $\bar{\alpha}<1$ and since $u_0(\cdot)$ can only decrease by decreasing $\alpha$, then $u_0(s_0^*,s_1^*)>u_0(\bar{s}_0,\bar{s}_1)$ and the SES is not dominant in this case. For the first case in \eqref{eq:ne1}, it is easy to see that $u_0(s_0^*,s_1^*)=u_0(\bar{s}_0,\bar{s}_1)$. Finally, for the middle case, it can be seen that $\bar{\alpha} \leq \alpha_Q$ implying that $u_0(s_0^*,s_1^*)\geq u_0(\bar{s}_0,\bar{s}_1)$. 

Now consider the scenario $a<b$. Here, $D_0=\{(0,\alpha):\alpha\in[0,\tilde{\alpha}]\}\cup\{(P^*(\alpha),\alpha):\alpha\in[\tilde{\alpha},1]\}$. It can be seen that $u_1(\alpha,D_0(\alpha))$ is linear and increasing for $\alpha < \tilde{\alpha}$ and is multivalued at $\alpha=\tilde{\alpha}$. In addition, it is monotonically decreasing for $\alpha>\tilde{\alpha}$, with a discontinuity at $\alpha=\tilde{\alpha}$ such that $u_1(\tilde{\alpha}^-,D_0(\tilde{\alpha}^-))>u_1(\tilde{\alpha}^+,D_0(\tilde{\alpha}^+))$. Hence, there exists an $\epsilon$-SES at $\bar{s}_1=\bar{\alpha}=\tilde{\alpha}-\epsilon$ implying that $u_0(\bar{s}_0,\bar{s}_1)=0<u_0(s^*_0,s^*_1)$ at such equilibrium. This concludes the proof.
\end{IEEEproof}

As given in Lemma \ref{lemma:ses}, at any SES  of $\Gc$ and comparing to the first two cases in \eqref{eq:ne1}, SU can choose $\alpha$ that leads to a larger secondary utility. However, this choice can only degrade the primary utility $u_0(\cdot)$ and hence no SES dominates NE of $\Gc$ according to Definition \ref{def:dominance}. The results of this section are summarized in the following theorem where the proof follows from Lemmas \ref{lemma:sep1}, \ref{lemma:sep2} and \ref{lemma:ses} and the fact that PU can threaten SU to play the NE \eqref{eq:ne1}, \eqref{eq:ne2}.
\begin{theorem}
For the game $\Gc$, SU accepts to play as the follower and the outcome of the game is the SEP point in Lemma \ref{lemma:sep1} and Lemma \ref{lemma:sep2}.
\label{thm:sep}
\end{theorem}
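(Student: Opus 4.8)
The plan is to combine the three lemmas of this section with a credible-threat argument that pins the simultaneous-play (Nash) outcome of $\Gc$ as each player's fallback, and then to show that among the ways the two players can organize their interaction---play $\Gc$ simultaneously, let PU lead, or let SU lead---the only arrangement both players rationally accept is the one with PU as leader, whose outcome is precisely the $\epsilon$-SEP of Lemmas~\ref{lemma:sep1} and~\ref{lemma:sep2}.

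First I would make the threat precise. Absent any agreed leader--follower ordering the players face $\Gc$ simultaneously, and its outcome is the NE of Theorem~\ref{thm:nash1} (for $a\geq b$) or Theorem~\ref{thm:nash2} (for $a<b$), yielding payoffs $u_0(s_0^*,s_1^*)$ and $u_1(s_0^*,s_1^*)$. Crucially PU can unilaterally commit to its NE strategy $s_0^*$: by Definition~\ref{def:nash}, SU's best response is then $s_1^*$, so the NE payoffs are enforceable by PU alone and constitute a credible threat point. Consequently any leader--follower arrangement PU is willing to enter must give PU at least $u_0(s_0^*,s_1^*)$, and any arrangement SU is willing to enter must give SU at least $u_1(s_0^*,s_1^*)$.

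Next I would rule out SU's leadership. By Lemma~\ref{lemma:ses} no SES dominates the NE, which by Definition~\ref{def:dominance} (with the player roles exchanged) means that at every SES the follower PU obtains no more than $u_0(s_0^*,s_1^*)$, and strictly less in general. Hence PU never strictly gains by following SU and will instead invoke its threat and revert to simultaneous play; SU therefore cannot enforce its own leadership. On the other hand, Lemmas~\ref{lemma:sep1} and~\ref{lemma:sep2} exhibit, for every $\epsilon>0$, an $\epsilon$-SEP---the PU strategy $s_{0\epsilon}^1$ of~\eqref{eq:s1sep1} when $a\geq b$, or $s_{0\epsilon}^2$ of~\eqref{eq:s1sep2} when $a<b$, together with SU's best response from $D_1$---and, for $\epsilon$ small enough, this $\epsilon$-SEP dominates the NE: SU's payoff is at least $u_1(s_0^*,s_1^*)$ (strictly larger whenever $Q\leq P'(1)$) while PU's payoff is within $\epsilon$ of $\bar{u}_0\geq u_0(s_0^*,s_1^*)$. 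Thus the $\epsilon$-SEP is acceptable to PU and, being (weakly) better than the fallback, also acceptable to SU; since SU has no credible alternative, SU agrees to play as the follower and the outcome of the game is the $\epsilon$-SEP of Lemmas~\ref{lemma:sep1}--\ref{lemma:sep2}. In the degenerate cases $Q\leq 0$ and $Q>P'(1)$ the SEP and the NE coincide and the claim is immediate, SU being indifferent.

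The hard part will be the threat itself rather than the bookkeeping: one must argue that the NE is the appropriate fallback because it is self-enforcing once PU commits (so the threat is credible, not merely punitive), that the $\epsilon$ in the $\epsilon$-SEP can be taken uniformly small enough that strict dominance over the NE survives in the intermediate regime $0<Q\leq P'(1)$, and that no richer self-enforcing convention (e.g. randomizing which player leads) could improve SU's lot---which again collapses to Lemma~\ref{lemma:ses} combined with the fact that PU's Stackelberg value $\bar{u}_0$ already weakly dominates every NE payoff of PU.
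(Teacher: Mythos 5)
Your proposal is correct and follows essentially the same route as the paper, whose proof is the single remark that the theorem ``follows from Lemmas~\ref{lemma:sep1}, \ref{lemma:sep2} and \ref{lemma:ses} and the fact that PU can threaten SU to play the NE.'' You simply make explicit the credible-threat reasoning (the NE as PU's unilaterally enforceable fallback, Lemma~\ref{lemma:ses} ruling out SU's leadership, and the dominance of the $\epsilon$-SEP from Lemmas~\ref{lemma:sep1}--\ref{lemma:sep2}) that the paper leaves implicit.
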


\begin{remark}
Similar results can be derived for other threat games. For example, a threat game is recently considered in \cite{Khalil:Multiple12} where SU is a jammer that divides available transmission time between transmitting own information and transmitting noise symbols.
\end{remark}
\begin{remark}
Since $u_0(\cdot)$ is decreasing in $\alpha$ and $P^*(\alpha)$ is increasing in $\alpha$ for an increasing cost function $J(P_0)$ with $J(0)=0$, then the main results in the paper will hold for the general utility function $u_0'(P_0,\alpha)$. In particular, the structure of the NE derived in Theorems \ref{thm:nash1} and \ref{thm:nash2} is preserved given some $P^*(0),P^*(1)$. Then, the argument in Theorem \ref{thm:sep} will also be true. This fact can be illustrated using Figure \ref{fig:numerical}. For the general form of $u_0(\cdot)$, if $P^*(0)\leq Q\leq P^*(1)$, PU will choose $P_0 = Q-\epsilon$ for some $\epsilon>0$ so that it can operate on the curve $u_0(P_0,1)$ rather than the curve $u_0(P_0,Q)$ when $P^*(0)\leq Q\leq P^*(1)$ and $u_0(P_0,0)$ when $Q< P^*(0)$.
\end{remark}

Given that both players are rational and that both consider Nash and Stackelberg games, it is clear from Theorem \ref{thm:sep} that both players will choose to play the Stackelberg game with PU as leader and SU as follower in all cases of channel conditions and energy cost parameters. 

It can be seen from \eqref{eq:u2} that when the secondary channel is weak, i.e., $c$ is small, SU achieves negative utility if it is transmitting all the time. By threatening the PU via eavesdropping, SU forces PU to play a Stackelberg game, which enables the SU to achieve a strictly positive utility. However, it is interesting to note that since PU is the leader in this game, it specifies how much transmission is allowed to the SU by choosing $\epsilon>0$. No matter how small $\epsilon$ is chosen, SU is forced to comply with this specification. When the secondary channel is strong, i.e., $c$ is large, interference on SU from PU is negligible, and therefore SU transmits all the time. In addition, PU achieves the largest possible utility in this case.

The following is an example practical scenario for the game we consider: an SU joins an existing primary system and measures the received signals from PU to estimate the eavesdropper channel gain $b$. Assume that maximum power levels and energy cost are common knowledge in the game. Since $D$ estimates the primary channel gain $a$, SU can also decode the feedback signals from $D$ to PU to know $a$. Then, SU announces its presence and therefore $D$ can measure the received signals from SU, estimate $c$ and then feedback this value to PU. Next, we focus on the general case of asymmetrical channels where PU can not estimate $b$ from the received signals from SU on the reverse channel. To this end, SU has information about $a,b$ and $c$ while PU has information about $a$ and $c$ only. PU can play the Stackelberg game only when it has information about $b$, in which case it announces its strategy, i.e., the primary power level and the value of $\epsilon$. However, SU may not want to share the private information about the actual value of $b$ with PU in order to improve its own utility. This motivates our analysis in the next section.

\section{Hidden Eavesdropper Channel}
\label{sec:unknownb}
In this section, we relax the assumption of the knowledge of the eavesdropper channel gain $b$ at the PU that was used in the analysis in sections \ref{sec:nash} and \ref{sec:stackelberg}. Specifically, we assume that SU can choose to hide the actual eavesdropper channel gain $b$ and PU has only statistical information about $b$. Hence we have a non-cooperative static game with {\it incomplete information} or a {\it Bayesian game} \cite{Myerson:Game91}. This model is more practical in the sense that SU would not be willing to reveal this private information in general.

As introduced in Section \ref{sec:background}, the objective of the players in the Bayesian game is to maximize the expected value of their utility payoff functions. Here, PU has a single type while the type of SU is the actual value of the eavesdropper channel gain $b$. The BE depends on the probability distributions that represent the belief of the players of the unknown parameters. In this section, we assume that PU believes that the eavesdropper channel gain is a realization of a Rayleigh channel. Hence, $b$ is exponentially distributed with mean $\bar{b}$, i.e., $p_b(b')=\frac{e^{\frac{-b'}{\bar{b}}}}{\bar{b}}$. We assume that the average value $\bar{b}$ is known to PU.

The objective of PU in this case is to select a power level to maximize the expected utility payoff function with respect to $b$, which is given by
\begin{align}
{\mathbb E}_b &[u_0(P_0,\alpha)]\notag\\
=& \int_0^\infty{ \left( C(aP_0)-(1-\alpha)C(bP_0)-\gamma P_0\right) \frac{1}{\bar{b}}} e^{\frac{-b}{\bar{b}}}db\notag\\
=& C(aP_0)-\frac{(1-\alpha)}{\bar{b}} \int_0^\infty {C(bP_0)\ e^{\frac{-b}{\bar{b}}}db} - \gamma P_0\notag\\
=& C(aP_0)-\frac{(1-\alpha)}{2\ln(2)} e^{\frac{1}{\bar{b} P_0}} \Gamma(0,\frac{1}{\bar{b} P_0}) - \gamma P_0\label{eq:u1new}
\end{align}
where $\Gamma(s,x)$ is the upper incomplete Gamma function and the last equality follows by \cite{Wolfram:Mathematica7} for $P_0\neq 0$. It is known that for real and positive $x$, we have
\begin{align}
\Gamma(0,x)= - \Ei(-x) = \E_1(x)
\end{align}
where $\Ei(x)=\int_{-\infty}^x{\frac{e^{t}}{t}dt}$ is the exponential integral function and $E_1(x)=\int_{x}^{\infty}{\frac{e^{-t}}{t}dt}$. The eavesdropping term in \eqref{eq:u1new} is concave and increasing in $P_0$ which has similar structure to the eavesdropping term in the case of known eavesdropper channel in previous sections. Hence, it can be seen that ${\mathbb E}_b[u_0(\cdot)]$ and the equilibrium of the game in this section will have the same structure as their corresponding parts in Sections \ref{sec:nash}, \ref{sec:stackelberg}. However, it is hard to maximize \eqref{eq:u1new} analytically. Consequently, in the rest of this section, we perform a numerical study to compute and compare equilibria of the new game to the case when the eavesdropper channel is known at the PU. 

Let the optimal response of the PU in the hidden $b$ case be $P_b(\alpha)$, i.e., $P_b(\alpha)$ maximizes ${\mathbb E}_b [u_0(P_0,\alpha)]$ for some given $\alpha$ and $\bar{b}$. Note that the function $P_b(\alpha)$ is fixed for a given $\bar{b}$ while the function $P^*(\alpha)$ depends on the actual realization of $b$. In Figure \ref{fig:P1alpha}, we compare $P_b(\alpha)$ to $P^*(\alpha)$ for a realization of the eavesdropper channel gain $b=0.7$ where we also set $a = 3, \bar{b} = 0.7, \beta = 1, \bar{\gamma} = 1,  P^{\text{max}}_0 =  P^{\text{max}}_1  = 10$. It is clear that both curves meet at $\alpha=1$ since the eavesdropping term in $u_0(\cdot)$ vanishes at this point where ${\mathbb E}_b [u_0(P_0,1)] = u_0(P_0,1)$.
For $\alpha<1$ and a given eavesdropper channel realization $b$, PU uses a higher power level if it only knows $\bar{b}$ than that used when the actual realization of $b$ is known, i.e., $P_b(\alpha)>P^*(\alpha)$ in this case. In general, $P_b(\alpha)\neq P^*(\alpha)$ and hence PU's expected utility decreases in the hidden $b$ scenario with respect to the known $b$ scenario.
\begin{figure}[thb]
	\centering
    \includegraphics[width=0.8\columnwidth]{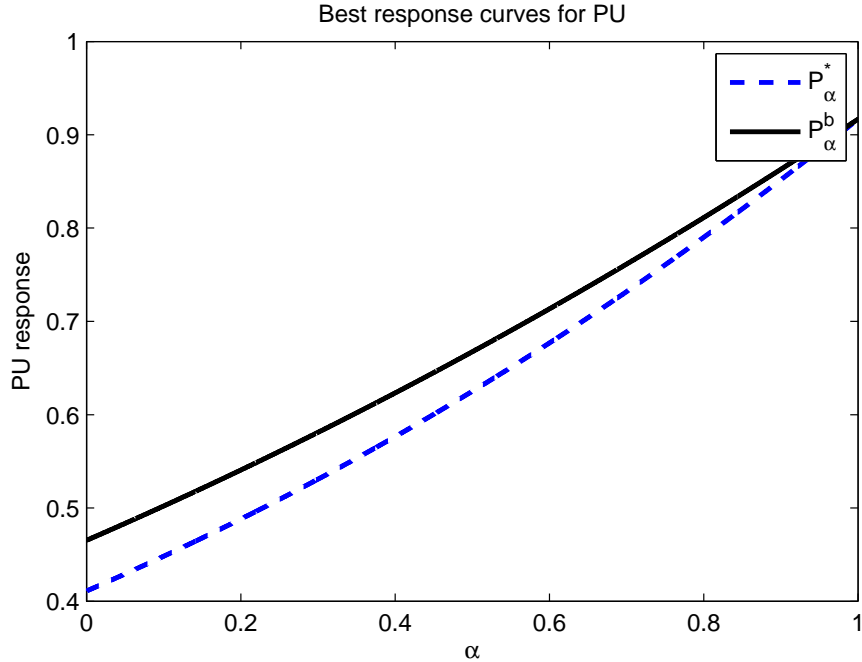}
    \caption{Example primary rational reaction curves for the channel realization $b=0.7$ and $a = 3, \bar{b} = 0.7, \beta = 1, \bar{\gamma} = 1,  P^{\text{max}}_0 =  P^{\text{max}}_1  = 10$}
    \label{fig:P1alpha}
\end{figure}

Next, we compare the utility achieved by PU and SU for different values of the mean $\bar{b}$ and the secondary channel gain $c$. At the SEP of $\Gc$, PU specifies the amount of information allowed for SU, i.e., the value of $\epsilon$. In this section, we fix $\epsilon=10^{-2}$. In Figure \ref{fig:u1fig}, the utility of PU is plotted and compared for both cases of knowledge of $b$ at the PU, when $c=0.7$ and $c = 1.3$. We compare average performance where the utility is averaged over 10,000 realizations of $b$. In this comparison, we use $ P^{\text{max}}_0  =  P^{\text{max}}_1  = 5$, $a = 3$, and $\beta=\bar{\gamma} =1$. It is clear that knowing the exact value of $b$ implies utility that is never less than the other case, i.e., knowledge of $b$ can not hurt PU. In addition, the utility in the hidden $b$ scenario approaches that of the known $b$ scenario for larger range of $\bar{b}$ when $c$ is large. This result can be expected from Lemma \ref{lemma:sep1} and Lemma \ref{lemma:sep2}, where the effect of eavesdropping (and hence not knowing the actual value of $b$) vanishes when $c$ is large enough so that $Q > P^*(1)$ with high probability.
\begin{figure}[thb]
    \centering
    \includegraphics[width=0.75\columnwidth]{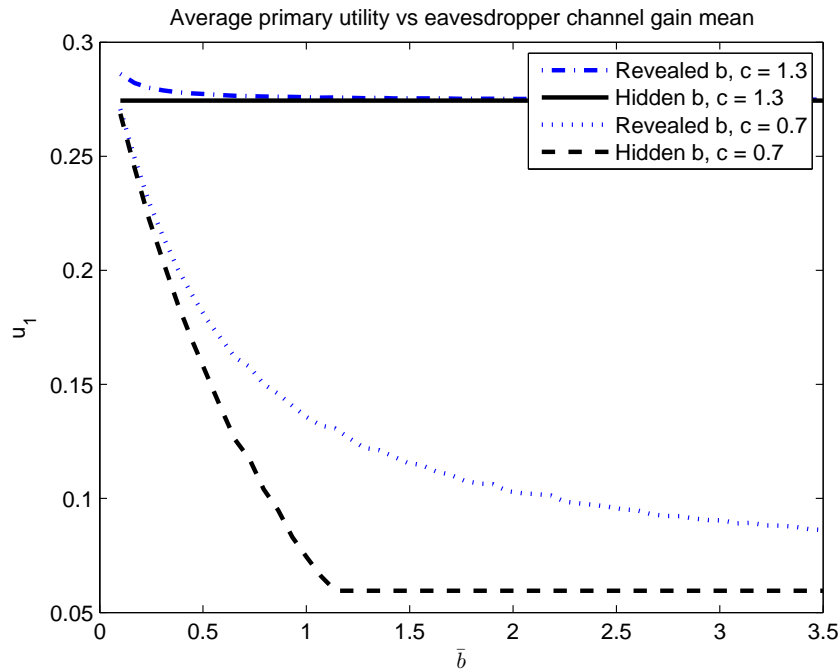}
    \caption{Utility of PU for revealed and hidden $b$ cases.}
    \label{fig:u1fig}
\end{figure}

Finally, we study the effect of hiding $b$ on the utility of SU. When $c\leq 0.6$, $Q \leq 0$ and SU chooses $\alpha=0$ for all $P_0$. In this case, $u_1(\cdot)=0$ for either case of knowledge of $b$ at the PU for all the considered range of $\bar{b}$. In Figure \ref{fig:u2fig}, we plot $u_1(\cdot)$ versus $\bar{b}$ using the same parameters as in the previous case. Here, it is clear that not revealing the actual value of $b$ improves the utility of SU by confusing PU and lowering down its utility, for large values of $\bar{b}$. However, for small values of $\bar{b}$, SU can in fact decrease its own utility by hiding the actual value of $b$. 
\begin{figure}[thb]
    \centering
    \includegraphics[width=0.75\columnwidth]{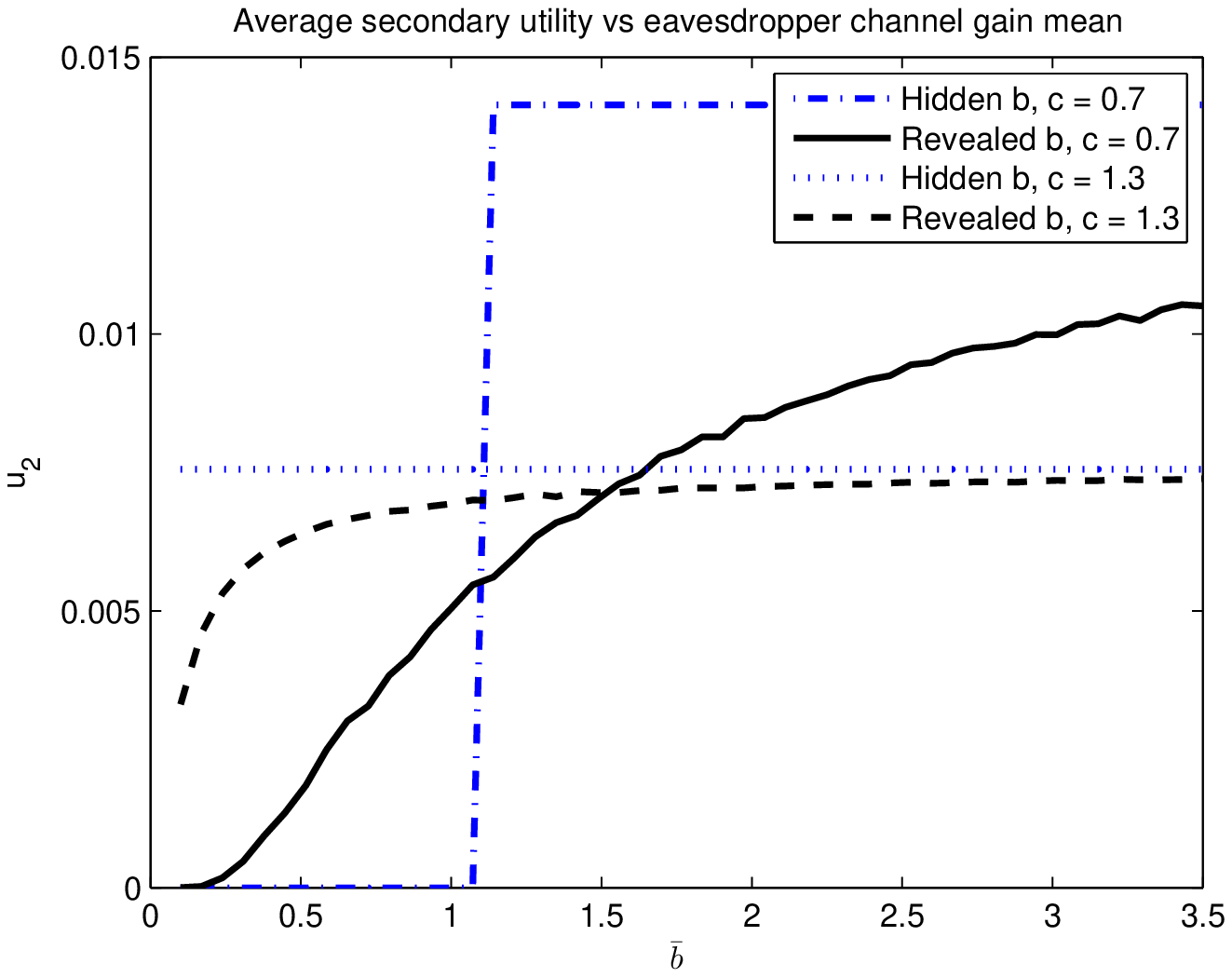}
    \caption{Utility of SU for revealed and hidden $b$ cases.}
    \label{fig:u2fig}
\end{figure}

This last result can be attributed to the following reason. First, when $\bar{b}$ is small so that actual value of $b$ is small with high probability, $u_0(\cdot)$ and ${\mathbb E}[u_0(\cdot)]$ are concave in $P_0$. Then, the decision of PU is made according to \eqref{eq:s1sep1} if SU reveals $b$ and similarly if $b$ is hidden. In addition, when $c$ is small so that $Q<P'(0)$ and $Q<P_b(0)$, we have a range of $\bar{b}$ values such that
\begin{align}
u_0(P'(0),0)<u_0(Q-\epsilon,1)<{\mathbb E}[u_0(P_b(0),0)],
\end{align}
where in this range, the optimal decision for PU is $P_0 = Q-\epsilon$ in case of revealed $b$ and $P_0=P_b(0)$ for hidden $b$ case. Hence, SU's decision (as a follower) is $\alpha = 1$ and $\alpha = 0$ in the cases of revealed and hidden $b$, respectively. Thus, {\it SU is willing to reveal the actual value of $b$ to achieve better utility in this case.}


\section{Multiple-User Game}
\label{sec:multi}
In this section, we extend the two-player model considered in the previous sections to the case when there exist multiple primary and secondary users in the network communicating to a common destination. Similar to previous sections, we consider successive interference cancellation at the decoder. In this multiple secondary users setting, the decoder is required to find the optimal order for decoding different users such that the utility of the PU is maximized at the equilibrium. First, we extend the results of Sections \ref{sec:nash} and \ref{sec:stackelberg} and show that a similar conclusion holds for the extended game given a general decoding order. Then, we show through an example that finding the optimal decoding order is a combinatorial problem. Next, we focus on a special case for the parameters of the SUs and show that the optimal decoding order can be analytically characterized. Specifically, we show that it is optimal for the primary system to give higher decoding priority to stronger eavesdroppers in this case. Finally, based on the analysis, we present and numerically evaluate an algorithm for the primary system to select SUs for decoding and find the optimal power level for the PU.

Here, we focus on a network model where resources are assigned to primary users orthogonally. That is, each PU is operating exclusively on a separate time-frequency resource. This model covers a wide range of practical wireless networks, including LTE networks where each user is separately assigned a number of Physical Resource Blocks (PRBs). Recall that we consider one shot games that apply for a single time slot. Thus, without loss of generality, we will consider the case where the network is composed of one PU and multiple SUs all communicating to a common destination on a single physical resource. We also assume that SUs eavesdrop the transmission of the PU independently, i.e., we do not consider colluding SUs. We consider the case when all eavesdropper channel gains are known at the base station similar to Sections \ref{sec:nash} and \ref{sec:stackelberg}.

When multiple eavesdropping SUs are present, the utility of the PU is given by
\begin{align}
u_0(P_0,\alpha_1,\alpha_2,\cdots,\alpha_N) = C(aP_0)-\max_{i\in\Nc}\left\{(1-\alpha_i)C(b_iP_0)\right\} - \gamma P_0
\label{eq:u0all}
\end{align}
where $N$ is the number of SUs, $\Nc = \{1,2,\cdots,N\}$; $b_i$ is the eavesdropper channel gain of SU $i$. For some given $P_0$, we say that SU $k$ with $k = \argmax_i {(1-\alpha_i)C(b_i P_0)}$ dominates the utility of the PU. We also use an approximation to PU's secure rate based on Lemma \ref{lemma:approx}. The utility function of SU $i$ is given by
\begin{align}
u_i(P_0,\alpha_1,\alpha_2,\cdots,\alpha_N) = \alpha_i\left[ C\left(\frac{c_i P_i^{\text{max}}}{1 + a P_0 +\sum_{j\in\Nc'(i)}{\alpha_j c_jP_j^{\text{max}}}}\right)-\beta_i\right]
\label{eq:ui}
\end{align}
where $c_i$ and  $P_i^{\text{max}}$ are the secondary channel gain and the transmission power of SU $i$; $\Nc'(i)\subset \Nc$ is the set of SUs that are given higher priority than SU $i$ at the decoder (i.e., signal of SU $j\in \Nc'(i)$ is decoded after decoding signal of SU $i$), and $\beta_i$ is the energy cost parameter for SU $i$. Suppose, without loss of generality, that $b_1\geq b_2\geq \cdots \geq b_N$. We define the primary utility $u^i_0(P_0,\alpha_i)$ when only SU $i$ is present as follows.
\begin{align}
u^i_0(P_0,\alpha_i) = C(aP_0)-(1-\alpha_i)C(b_iP_0) - \gamma P_0,
\end{align}
$\forall i\in\Nc$. We also define $u^0_0(P_0) = u_0(P_0,1,\cdots,1) = C(aP_0) - \gamma P_0$. It can be seen that $u_0(P_0,\alpha_1=1,\alpha_2=1,\cdots, \alpha_{i-1}=1,\alpha_{i},\cdots,\alpha_N)=u_0^i(\cdot)$. We define PU's power level that maximizes $u_0^i(\cdot)$ given $\alpha_i$ as $P^i(\alpha_i)$. Note that $P^i(1)=P^*(1)$ for all $i$, where $P^*(1)$ is given by \eqref{eq:P11}. For an SU $i$, we define the threshold
\begin{align}
Q_i (\alpha_1,\alpha_2,\cdots,\alpha_{N}) = \frac{1}{a}\left(\frac{c_i  P^{\text{max}}_i }{2^{2 \beta_i}-1}-1-\sum_{j\in\Nc'(i)}{\alpha_j c_jP_j^{\text{max}}}\right).
\label{eq:Qi}
\end{align}

\subsection{Equilibrium Analysis}
\label{sec:eq_anal}
In this section, we characterize the equilibria of the multi-player game where the PU and SUs are the players. We note that characterization of Nash equilibria of the mutli-player game is challenging, where the difficulty is mainly due to the structure of the utility of the PU. 
In the following, we show that the results we presented for the two-player game in Sections \ref{sec:nash} and \ref{sec:stackelberg} also hold for the multi-player game. Specifically, we first describe the structure of the NE for any given decoding order. Then, based on this description, we show that the SEP dominates any NE of the multi-player game while any SES (i.e., any Stackelberg equilibrium where one SU is the leader and PU as well as other SUs are followers) can not dominate the NE.

For the multi-player game, it can be seen that the game is a continuous game, i.e., the utility functions are continuous and the strategy sets are compact. Thus, we know that there exist an NE for this game \cite{Basar:Dynamic95}. The utility functions of SUs  are concave in their corresponding strategies. Thus, SUs will employ pure strategies at the equilibrium. 
We also note that at NE, since the utility of each SU is linear in the corresponding strategy, then if $\alpha_i^* < 1$, then $u_i^* = 0$.

Similar to the two-player game in Section \ref{sec:stackelberg}, we show that the outcome of the game will be the equilibrium of the leader follower game with the PU as leader and SUs as followers. 

\begin{lemma}
For the multi-player game with a given decoding order, SEP dominates NE.
\label{lemma:sep-multi}
\end{lemma}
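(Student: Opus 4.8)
The plan is to mimic the structure of the proofs of Lemmas \ref{lemma:sep1} and \ref{lemma:sep2}, lifting them from two players to $N+1$ players. The starting point is to describe the NE of the multi-player game for a fixed decoding order. Each SU $i$ has a best response of the same threshold type as in the two-player game: its utility $u_i$ is linear in $\alpha_i$ with slope determined by the sign of $Q_i(\alpha_1,\dots,\alpha_N) - P_0$ (compare \eqref{eq:Qi} with \eqref{eq:Q}), so that $T_i = 1$ when $P_0 < Q_i$, $T_i = 0$ when $P_0 > Q_i$, and $T_i \in [0,1]$ when $P_0 = Q_i$. Since the $Q_i$ depend on the other $\alpha_j$ only through $j \in \Nc'(i)$, one can analyze the SUs in decoding order and conclude, using the remark in the excerpt that $\alpha_i^* < 1 \Rightarrow u_i^* = 0$, that at any NE there is a ``critical'' SU $k$ that dominates PU's utility (in the sense of \eqref{eq:u0all}) and PU's equilibrium power sits at $\min\{P^k(\cdot), \text{relevant } Q_i\}$, exactly paralleling the three-case structure of \eqref{eq:ne1}. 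The NE payoff of PU is therefore bounded above by $\bar u_0 = \sup_{P_0}\inf_{\text{followers}} u_0$, which is the generic fact from \cite{Basar:Dynamic95} that holds for any game.

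Next I would exhibit an explicit $\epsilon$-SEP strategy for PU and check it dominates the NE. Because each SU's rational reaction to a fixed $P_0$ is still a threshold rule, the rational reaction set $D_{\text{SU}}(P_0)$ of the followers collectively is a product of intervals/points, and PU's worst-case utility $\inf_{s_{-0} \in D_{\text{SU}}(P_0)} u_0(P_0, s_{-0})$ is a well-defined function of $P_0$ that is piecewise of the form $u_0^i(P_0, 0)$ or $u_0^i(P_0,1)$ or $u_0^0(P_0)$ — continuous except for downward jumps at the thresholds where some SU flips from $\alpha_i = 1$ to $\alpha_i = 0$. As in \eqref{eq:s1sep1}, PU picks either an interior maximizer of one of the $u_0^i(\cdot,\cdot)$ branches, or the point $Q_i - \epsilon$ just below a threshold so as to keep the corresponding SU (and hence, by the ordering, the set of SUs it would otherwise ``activate'') transmitting at $\alpha_i = 1$; uniform continuity of the $u_0^i$ then gives $u_0(s_{0\epsilon}, D_{\text{SU}}(s_{0\epsilon})) \ge \bar u_0 - \epsilon \ge u_0(\text{NE}) - \epsilon$ for $\epsilon$ small, and dominance follows once we check each SU is no worse off. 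The last point is the same as in the two-player case: an SU that had $\alpha_i^* < 1$ at the NE earned $u_i^* = 0$, and under the $\epsilon$-SEP it either still earns $0$ or is pushed to $\alpha_i = 1$ with nonnegative (indeed strictly positive in the ``critical'' case) utility, so $u_i(\text{SEP}) \ge u_i(\text{NE})$ for all $i$, which is precisely Definition \ref{def:dominance}.

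The step I expect to be the main obstacle is making the NE characterization clean enough that ``pick $Q_i - \epsilon$'' is actually well-defined. With multiple SUs the thresholds $Q_i$ are coupled through the $\alpha_j$, $j \in \Nc'(i)$, so the map $P_0 \mapsto D_{\text{SU}}(P_0)$ need not be single-valued and the order in which SUs ``switch on'' as $P_0$ decreases has to be controlled — one wants something like: process SUs in decoding order, and show that once we decide which SUs are at $\alpha = 1$, the thresholds of the remaining ones are determined, so the set of achievable ``worst-case PU utility'' values is a finite union of continuous branches and PU can get $\epsilon$-close to the sup of this union. I would handle this by an induction on the decoding order (or on $N$), at each stage either fixing an SU at $\alpha_i = 1$ by putting $P_0$ just below its (now-determined) threshold or discarding it, which also explains why the statement is ``for a given decoding order'': the branch structure, and hence $\bar u_0$, depends on $\Nc'(\cdot)$. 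The remaining verifications — concavity/continuity of $u_0^i$, finiteness of $\bar u_0$ under finite channel gains and costs, and existence of an $\epsilon$-SE from \cite[Property 4.2]{Basar:Dynamic95} — are routine and transcribe directly from Lemma \ref{lemma:sep1}.
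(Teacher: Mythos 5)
Your route is more constructive than the paper's: you rebuild the two-player machinery (explicit NE characterization, an explicit $\epsilon$-SEP read off from the branch structure of $\inf_{s_{-0}\in D_{\mathrm{SU}}(P_0)}u_0(P_0,s_{-0})$, then a follower-by-follower comparison), whereas the paper's proof is a short direct comparison whose single key observation is that the PU can never gain by \emph{raising} its power above the Nash level, so $P_0^{SEP}\le P_0^{*}$, after which it argues that a weakly lower primary power cannot hurt any follower. Your construction is plausible and would subsume that observation, but as written it has a genuine gap exactly where the multi-player coupling matters.

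The gap is in the final dominance check. You verify $u_i(\mathrm{SEP})\ge u_i(\mathrm{NE})$ only for SUs with $\alpha_i^{*}<1$ at the NE (which earn zero there) and then conclude ``for all $i$.'' But an SU $i$ with $\alpha_i^{*}=1$ at the NE earns strictly positive utility, and it is precisely such an SU that can be hurt at the SEP: lowering $P_0$ below some threshold activates a previously eavesdropping SU $k\in\Nc'(i)$, which adds $\alpha_k c_k P^{\text{max}}_k$ to the interference term in \eqref{eq:ui} and therefore lowers both $u_i(\cdot,1)$ and the threshold $Q_i$ in \eqref{eq:Qi}; if $Q_i$ drops below $P_0^{SEP}$, SU $i$'s best response flips to $\alpha_i=0$ and its utility falls from positive to zero, violating Definition \ref{def:dominance}. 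This case simply cannot arise in the two-player game, which is why transcribing Lemmas \ref{lemma:sep1} and \ref{lemma:sep2} is not enough; you even identify the coupling of the thresholds through the $\alpha_j$ as ``the main obstacle,'' but only in the context of making the $\epsilon$-SEP well defined, not in the dominance verification. The paper's proof confronts exactly this point: after noting $P_0^{SEP}\le P_0^{*}$ it argues that the only SUs whose thresholds are reduced satisfy $Q_j\le Q_i\le P_0^{*}$, i.e., they were already at zero utility at the NE, so the reduction is harmless. You need some version of that argument (or an explicit check that every SU transmitting at the NE still transmits, with nonincreased interference, under your $\epsilon$-SEP) to close the proof.
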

\begin{proof}
We show that at the SEP, the utility of any follower in the game (i.e., any SU) cannot be less than that at an NE. Fix some decoding order. Suppose the strategies of the PU and SU $i$ at SEP are $P_0^{SEP},\alpha_i^{SEP}$. Note that PU cannot improve its performance with respect to NE by increasing its power level, i.e., $P_0^{SEP}\leq P^*_0$. Thus, at the SEP, the utility of any SU cannot be reduced compared to that at NE. Specifically, if $P^*_0 \geq Q_i$ and $P_0^{SEP} < Q_i$, then $\alpha_i^* < 1$ and $\alpha_i^{SEP} = 1$. In this case, the utility of SU $i$ is improved. In addition, for every SU $j$ with a lower decoding priority such that $Q_j = Q_j(\alpha_i)\leq Q_i$, $Q_j$ is reduced with respect to NE. However, since any SU $j$ with $\alpha_j^* < 1$ achieves zero utility at NE, then this decrease in $Q_j$ will not decrease $u_j$, concluding the proof.
\end{proof}

\begin{lemma}
For the multi-player game with a given decoding order, there exist no SES that dominates NE.
\label{lemma:ses-multi}
\end{lemma}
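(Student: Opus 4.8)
The plan is to reproduce the argument of Lemma~\ref{lemma:ses} at the level of an arbitrary decoding order. Fix the decoding order and designate an arbitrary secondary user, say SU~$\ell$, as the leader, so the PU and the remaining $N-1$ SUs are the followers. For any commitment $\alpha_\ell\in[0,1]$ of the leader, the followers play a Nash equilibrium of the induced game: by the discussion preceding Lemma~\ref{lemma:sep-multi} such an equilibrium exists, the non-leader SUs use pure threshold best responses in it (SU~$j$ plays $\alpha_j=1$ if $P_0<Q_j$ and $\alpha_j=0$ if $P_0>Q_j$, with $Q_j$ as in \eqref{eq:Qi}), and every non-leader SU~$j$ with $\alpha_j<1$ earns zero utility. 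Write $P_0^F(\alpha_\ell)$ for the induced primary power level. Since the full-game NE profile $(P_0^{*},\alpha^{*})$ is a mutual best response, committing to $\alpha_\ell^{*}$ reproduces it, so the leader can always secure its NE payoff $u_\ell^{*}$; the point to establish is that whenever it secures \emph{strictly} more, at least one follower is strictly worse than at the NE, so no SES dominates the NE in the sense of Definition~\ref{def:dominance}.

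Two monotonicity facts drive the argument, the multi-user analogues of ``$P'(\alpha)$ increasing'' and ``$u_0$ nondecreasing in $\alpha$'' from Lemma~\ref{lemma:ses}: (i) the PU's best-response power is nondecreasing in each $\alpha_i$, since raising $\alpha_i$ only shrinks the penalty $\max_i\{(1-\alpha_i)C(b_iP_0)\}$ in \eqref{eq:u0all}, hence $P_0^F(\cdot)$ is nondecreasing in $\alpha_\ell$; and (ii) the leader's per-unit-time payoff $C\!\left(c_\ell P_\ell^{\text{max}}/(1+aP_0+\sum_{j\in\Nc'(\ell)}\alpha_j c_j P_j^{\text{max}})\right)-\beta_\ell$ is strictly decreasing in $P_0$ and is nonnegative precisely for $P_0\le Q_\ell$. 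From (i)--(ii) I would deduce that the leader's optimal commitment satisfies $\bar\alpha_\ell\le\alpha_\ell^{*}$, hence $\bar P_0:=P_0^F(\bar\alpha_\ell)\le P_0^{*}$: if $\alpha_\ell^{*}=1$ the ``volume'' factor $\alpha_\ell$ cannot be enlarged and a larger bracket would require a smaller $P_0$; if $\alpha_\ell^{*}<1$ then $u_\ell^{*}=0$ and, by (i), any $\alpha_\ell>\alpha_\ell^{*}$ pushes $P_0^F(\alpha_\ell)$ above $P_0^{*}\ge Q_\ell$, making the leader's bracket nonpositive, so such commitments cannot yield positive utility. Thus whenever the leader strictly gains it does so with $\bar\alpha_\ell<\alpha_\ell^{*}$ and $\bar P_0<P_0^{*}$.

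To conclude I would compare utilities as in the middle case of the proof of Lemma~\ref{lemma:ses}. Since PU best-responds at the SES, $u_0^{\mathrm{SES}}=\max_{P_0}u_0(P_0,\bar\alpha)$, and optimality of $P_0^{*}$ for $\alpha^{*}$ gives $u_0^{\mathrm{SES}}-u_0^{*}\le u_0(\bar P_0,\bar\alpha)-u_0(\bar P_0,\alpha^{*})=\mathrm{pen}(\bar P_0,\alpha^{*})-\mathrm{pen}(\bar P_0,\bar\alpha)$, with $\mathrm{pen}(P_0,\alpha)=\max_i(1-\alpha_i)C(b_iP_0)$. Hence a dominant SES would force $\mathrm{pen}(\bar P_0,\bar\alpha)\ge\mathrm{pen}(\bar P_0,\alpha^{*})$, and writing $k$ for the eavesdropper attaining the left-hand maximum yields $\bar\alpha_k\le\alpha_k^{*}$. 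If $\alpha_k^{*}=1$: either $\bar\alpha_k<1$, in which case SU~$k$ earns a strictly positive payoff at the NE but at most $0$ at the SES (its bracket there is nonpositive), contradicting dominance; or $\bar\alpha_k=1$, in which case $\mathrm{pen}(\bar P_0,\bar\alpha)=0$ forces $\mathrm{pen}(\bar P_0,\alpha^{*})=0$ and $u_0^{\mathrm{SES}}\le u_0^{*}$ directly. If $\alpha_k^{*}<1$, then $u_k^{*}=0$ and I would invoke the detailed NE structure for the given decoding order together with the observation already used in the proof of Lemma~\ref{lemma:sep-multi} (a non-leader SU with $\alpha^{*}<1$ is at zero utility and hence cannot be hurt) to trace the penalty terms and show that, at the power $\bar P_0\le P_0^{*}$, the net change in $u_0$ from $\alpha^{*}$ to $\bar\alpha$ is dominated by the nonpositive contribution of the leader's own coordinate, giving $u_0^{\mathrm{SES}}\le u_0^{*}$, with strictness exactly when the leader genuinely gains. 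Since the decoding order and the leading SU were arbitrary, no SES dominates the NE.

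I expect this last case to be the main obstacle. Because $u_0$ in \eqref{eq:u0all} is a maximum over eavesdroppers, the PU's best response is only piecewise smooth, the identity of the dominating eavesdropper can switch as $\bar P_0$ moves, and a drop in $\bar P_0$ can flip some non-leader follower's action \emph{upward}, so $\bar\alpha$ need not lie componentwise below $\alpha^{*}$ and one cannot simply invoke monotonicity of $u_0$ in $\alpha$. The careful argument must separate the effect of the leader's coordinate (which always pushes $u_0$ down, via $\bar\alpha_\ell\le\alpha_\ell^{*}$ together with $\bar P_0\le P_0^{*}$) from the effect of the other followers' coordinates (which can only push $u_0$ up, but belong to SUs already at zero utility, hence not hurt), and then verify that the resulting net effect on $u_0$ is nonpositive; handling ties in the thresholds $Q_j$ and non-uniqueness of the induced follower equilibrium is the remaining technical point.
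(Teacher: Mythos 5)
Your proposal is a plan rather than a proof: the case that actually carries the lemma is left as ``I would invoke the detailed NE structure \ldots and show that \ldots,'' and that is precisely where the paper's proof does all of its work. Two concrete gaps. First, your monotonicity fact (i) is about the PU's \emph{best response} to a fixed profile $(\alpha_1,\dots,\alpha_N)$, but you use it to conclude that the \emph{induced follower-equilibrium} power $P_0^F(\alpha_\ell)$ is nondecreasing in the leader's commitment. That does not follow: the non-leader SUs are threshold players whose responses move against the PU's (a larger $P_0$ pushes an SU $j$ with $P_0>Q_j$ to $\alpha_j=0$, which enlarges the penalty $\max_i(1-\alpha_i)C(b_iP_0)$ in \eqref{eq:u0all} and pushes $P_0$ back down, while a smaller $P_0$ can flip some $\alpha_j$ up to $1$), and the thresholds $Q_j$ in \eqref{eq:Qi} themselves shift with the higher-priority SUs' actions. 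So the fixed point of the follower subgame need not be monotone in $\alpha_\ell$ without an additional argument, and your deductions $\bar\alpha_\ell\le\alpha_\ell^{*}$ and $\bar P_0\le P_0^{*}$ rest on it. Second, in your final penalty-decomposition step the subcase $\alpha_k^{*}<1$ (equivalently, the leader or another eavesdropper dominating $u_0$ at the NE) is exactly the non-trivial situation, and you acknowledge you have not closed it; reducing the lemma to ``verify the net effect on $u_0$ is nonpositive'' is a restatement of the claim, not a proof of it. (A smaller slip: in the subcase $\alpha_k^{*}=1$, $\bar\alpha_k<1$, the NE payoff of SU $k$ is only guaranteed nonnegative, not strictly positive, when $P_0^{*}=Q_k$.)

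For comparison, the paper's proof avoids any global monotonicity of the induced equilibrium. It conditions directly on the leader SU $i$'s NE action and on the threshold structure: if $\alpha_i^{*}=1$, lowering $\alpha_i$ only raises the thresholds $Q_j$ of lower-priority SUs, which raises the equilibrium $P_0$ and hence lowers $u_i$, so the leader keeps $\alpha_i=1$ and nothing changes; if $\alpha_i^{*}<1$ and some SU $j$ with $b_j>b_i$ has $Q_j\ge Q_i$, the leader cannot gain at all; and if SU $i$ dominates $u_0$ at the NE, then choosing $\alpha_i^{SES}<\alpha_i^{*}$ either does not help the leader (when $Q_i<P^i(0)$) or strictly lowers $u_0$, while choosing $\alpha_i^{SES}=1$ lowers the thresholds of lower-priority SUs and again lowers $u_0$. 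If you want to salvage your route, you essentially have to reproduce this case analysis inside your ``remaining technical point,'' at which point the penalty decomposition and the monotonicity of $P_0^F$ become unnecessary scaffolding.
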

\begin{proof}
We compare the utility of different players achieved at NE to that achieved at an SES when SU $i$ is the leader, and show that SES cannot dominate NE. Fix some decoding order. Suppose $\alpha_i^* = 1$. Then if $\alpha_i^{SES} = 1$, then same utility is achieved by all other players (followers). In addition, SU $i$ can not improve its utility by selecting $\alpha_i^{SES} < 1$ since this can only increase the threshold values for all lower priority SUs which implies increased $P_0$ at the equilibrium and thus decreased $u_i$. Now suppose $\alpha_i^* < 1$. Thus, the utility of SU $i$ at NE is $u_i^* = 0$. If $\exists j$ such that $b_j > b_i$ and $Q_j\geq Q_i$, then SU $i$ can not improve its utility and SES is the same as NE. Note that if $\exists j$ such that $b_j > b_i$ and $Q_j < Q_i$, then $\alpha_i^*=1$. Finally, if SU $i$ dominates $u_0$ at NE, then selecting $\alpha_i^{SES}<\alpha_i^*$ will either not improve $u_i$ (if $Q_i<P^i(0)$) or will decrease the utility of the PU, compared to that achieved at NE. If $\alpha_i^{SES}=1$, then $Q_j$ will be decreased for any $Q_j\leq Q_i$ and $u_0$ will be decreased. Thus, for any possible NE, SES cannot dominate NE. 
\end{proof}

The following proposition concludes that the main result for the two-player game is also extendible to the multi-player game for a given decoding order. The proof follows immediately from Lemma \ref{lemma:sep-multi} and Lemma \ref{lemma:ses-multi}.
\begin{proposition}
Fix a given decoding order, the equilibrium of the multi-player game will be the $\epsilon$-SEP. 
\end{proposition}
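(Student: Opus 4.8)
The statement to prove is:

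\begin{proposition}
Fix a given decoding order, the equilibrium of the multi-player game will be the $\epsilon$-SEP.
\end{proposition}

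The plan is to combine the two preceding lemmas (Lemma \ref{lemma:sep-multi} and Lemma \ref{lemma:ses-multi}) with the fact that the PU can always enforce the NE as a credible threat, exactly as was done in the two-player case (Theorem \ref{thm:sep}). First I would recall that the multi-player game is continuous with compact strategy sets, so at least one NE exists and, since each SU's utility is concave (indeed linear) in its own strategy, the SUs play pure strategies at equilibrium. This establishes the baseline outcome against which every other candidate outcome is compared.

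Next I would invoke Lemma \ref{lemma:sep-multi} to assert that the $\epsilon$-SEP (with the PU as leader and all SUs as followers) weakly dominates any NE: every follower's utility at the $\epsilon$-SEP is at least its NE utility, and the PU's utility is within $\epsilon$ of $\bar u_0 \geq u_0(\text{NE})$. Hence the PU strictly prefers (for sufficiently small $\epsilon$) to play as leader, and no SU loses by accepting the follower role. The only remaining concern is whether some SU could do even better by insisting on being a leader itself; Lemma \ref{lemma:ses-multi} rules this out by showing no SES dominates the NE, so an SU's best guaranteed outcome as leader is no better than the NE, which in turn is no better than the $\epsilon$-SEP. Therefore no player has a credible incentive to deviate from the leader-follower arrangement with the PU as leader.

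The closing step is to argue credibility of the threat: if an SU refuses to follow, the PU can revert to its NE strategy, which (by the NE structure description) leaves that SU with at most its NE payoff — never more than it would get at the $\epsilon$-SEP. Thus the unique self-enforcing outcome is the $\epsilon$-SEP. The main obstacle I anticipate is not any single computation but making the ``credible threat'' / outcome-selection argument airtight when multiple SUs are involved: one must ensure that, for a fixed decoding order, the NE structure is well-defined enough (each SU with $\alpha_i^* < 1$ getting zero utility, the threshold $Q_i$ depending only on higher-priority SUs' strategies via \eqref{eq:Qi}) that the chain of inequalities $u_i(\text{$\epsilon$-SEP}) \geq u_i(\text{NE}) \geq u_i(\text{any SES SU $i$ leads})$ holds simultaneously for all SUs and that the PU's deviation to the NE strategy is indeed feasible and payoff-reducing for any deviating follower. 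Since Lemmas \ref{lemma:sep-multi} and \ref{lemma:ses-multi} already package the hard content, the proof here is a short assembly and the obstacle is purely one of careful bookkeeping across players.
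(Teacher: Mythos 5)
Your proof is correct and follows essentially the same route as the paper: the paper's own justification is precisely that the result ``follows immediately from Lemma \ref{lemma:sep-multi} and Lemma \ref{lemma:ses-multi},'' combined (as in Theorem \ref{thm:sep} for the two-player case) with the PU's ability to credibly threaten reversion to the NE. Your additional bookkeeping about existence of the NE, pure strategies for the SUs, and the chain of inequalities is a faithful elaboration of the same argument rather than a different approach.
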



The secure rate of the PU is characterized by the order of the eavesdropper channel gains as well as the strategies of SUs. Let $j = \argmax_{i\in\Nc} b_i$. In general, if SU $j$ chooses to transmit by selecting $\alpha_j=1$, then the utility of PU will be improved and will depend on the channel of the second strongest eavesdropper channel. In addition, the primary system can decode transmissions from multiple SUs if this will improve the utility achieved by the PU. Note that an SU starts transmitting its own information and stops eavesdropping primary traffic only if its achieved rate is larger than its energy cost. Thus, in general, not all SUs are willing to switch from eavesdropping to transmission. In the most general case where multiple SUs are allowed to coordinate their coding schemes with the PU and successive interference cancellation is employed at the destination, the order of decoding will affect the decision of each SU since signals decoded first will suffer interference from all other signals to be decoded next (PU and other transmitting SUs) \cite{Cover:Elements91}. This can be seen from the structure of the utility function \eqref{eq:ui}.

It is clear that since the decoder always gives priority to the PU, then decoding PU's signal is always last in order so that interference from signals of other transmitters (i.e., SUs) do not affect the first term in \eqref{eq:u0all}. However, finding the decoding order for SUs that maximizes the utility of the PU is not straight forward. In fact, the decoding order will affect the equilibrium strategies of the multiple player game comprising the PU and all SUs as players. Specifically, the decoding order will affect the threshold on the PU transmission power level $Q_i$ associated with each SU $i$ above which SU $i$ is not willing to transmit its own information and will choose to always eavesdrop the transmission of the PU. In Section \ref{sec:stackelberg}, it was shown that the equilibrium of the game is determined according to the threshold value of the PU. Specifically, it was shown that if $Q > P^*(1)$ then both players achieve their maximum possible utility. However, in the other cases, the SU achieves zero utility if $Q$ is small and achieves positive utility if $Q$ is large enough (Lemmas \ref{lemma:sep1} and \ref{lemma:sep2}). As mentioned above, the utility of the PU is function only in the largest eavesdropped rate. Thus, from primary system's perspective, it is plausible to give higher priority in the decoding order to the SUs with higher eavesdropping capability (larger $b_i$) such that the threshold values of those stronger eavesdroppers are the largest and PU can achieve higher utility at the equilibrium of the game. However, in general, giving higher decoding priority to stronger eavesdroppers might not necessarily imply a better PU utility at the equilibrium of the game. The example in Figure \ref{fig:example2} illustrates this fact.


\begin{figure}[thb]
    \centering
    \includegraphics[width=0.85\columnwidth]{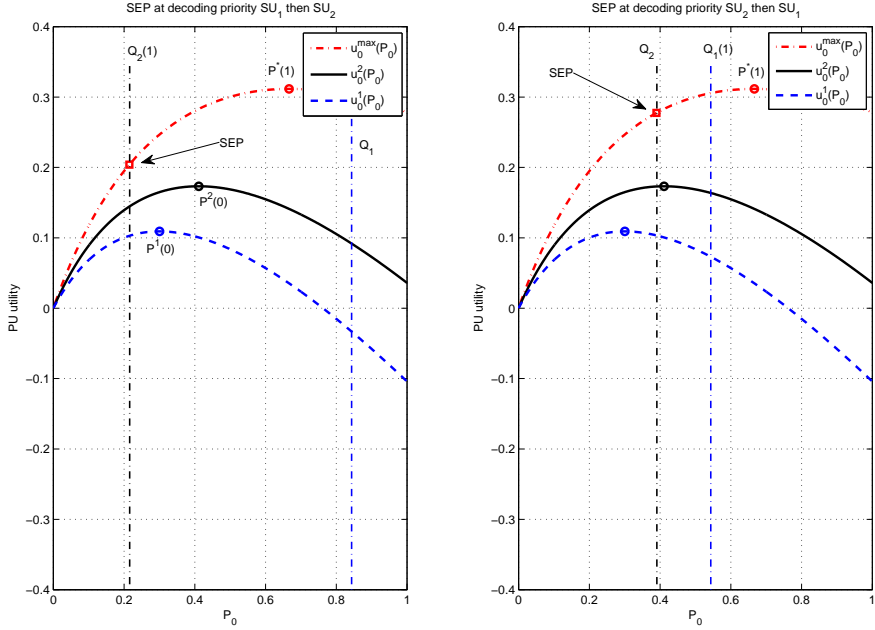}
    \caption{Example illustrating the optimal decoding order at equilibrium when $N$=2. Here, $a=3,\gamma=\log2(4)^{-1},b_1=0.7,b_2=0.4,c_1 = 0.6, c_2 = 0.35, \beta_1 = 0.1, \beta_2 = 0.25, P_1^{\text{max}}= P_2^{\text{max}} = 1.5$. When SU 2 has priority in decoding, the value of $u_0$ at the equilibrium is larger than its value when SU $1$ has the decoding priority. Thus, in this example, it is optimal to give SU $2$ decoding priority.}
    \label{fig:example2}
\end{figure}

The above example shows that, in general, it is not possible to characterize the optimal decoding order at the base station given only the order of the eavesdropping channel gains. Thus, it is not sufficient to consider how the SUs are ordered in terms of their eavesdropping channel gains to find the optimal decoding order that maximizes the PU utility at the equilibrium. In other words, the problem becomes of combinatorial nature and requires the comparison between the equilibrium outcome of the game for ($N!$) different possible orderings of decoding. In the next section, we focus on studying the equilibria of the multiple SU game for the special case when secondary user parameters are {\it uniform}. In this case, the decoding order necessarily affects the order of the thresholds of SUs. For this case, we show that the optimal order is to always give higher priority to the stronger eavesdropper, i.e., to decode signals in ascending order of $b_i$. 

\subsection{The Case of Uniform Secondary User Parameters}
\label{sec:uniform}

In this section, we consider a case where the secondary channels and cost parameters of secondary users are such that $c_i P_i^{\text{max}} = k_1$ and $\beta_i = k_2$, $\forall i\in\Nc$ for some constants $k_1,k_2>0$. We call this case the {\it uniform secondary user parameter	} case.  

First, note that when the PU announces its strategy in the Stackelberg game, all SUs will then play a Nash game and the Nash equilibrium (in response to the strategy announcement by the PU) will determine the equilibrium strategy profile of the Stackelberg game. Let the SU with highest decoding priority be $j$. Then, $\alpha_j^{SEP}$ is only function in $P_0^{SEP}$ since $Q_j$ is a constant. Suppose the SU with second highest decoding priority is $k$. In this case, $\alpha_k^{SEP}$ is a function in $\alpha_j^{SEP}$. Following along this direction, the NE of the followers game can be specified given the strategy of the leader. Suppose, without loss of generality, that $b_1\geq b_2 \geq \cdots \geq b_N$. Now if PU selects $P_0^{SEP}$ such that at the equilibrium we have $P_0^{SEP}> Q_1$, then we have $u_0 = u^1_0(P_0^{SEP},0)$ and the strategies of other SUs are irrelevant to PU's utility. In this case, $P_0^{SEP} = P^1(0)$. As shown in the two-player game in Section \ref{sec:stackelberg}, this choice may not be optimal if decreasing $P_0^{SEP}$ increases $u_0$ where $\alpha_1^{SEP}=1$ in this case. In the multiplayer game, the PU utility achieved if $P_0^{SEP} = Q_1-\epsilon$ is $u_0^l(P_0^{SEP},0)$, such that $l=\argmax_{i\in \Sc(P_0^{SEP})} b_i$ and $\Sc(P_0^{SEP})$ is the set of SUs with $Q_i<P_0^{SEP}$ at equilibrium. Every SU $i$ with $i\notin \Sc(P_0^{SEP})$ will have $\alpha_i^{SEP}=1$. We are now ready to present the result about the optimal decoding order.

\begin{proposition}
The decoding order $N,N-1,\cdots,1$ maximizes $u_0$ at the equilibrium of the multi-player game.
\end{proposition}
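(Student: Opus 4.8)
The plan is to reduce the claim to a pairwise \emph{exchange argument} on the decoding order, exploiting the fact that under uniform secondary parameters every secondary user shares the same ``base'' threshold. Write $k_1=c_iP_i^{\text{max}}$, $k_2=\beta_i$, and set $\tilde Q=\frac{1}{a}\big(\frac{k_1}{2^{2k_2}-1}-1\big)$, so that for \emph{any} decoding order $Q_i(\alpha_1,\dots,\alpha_N)=\tilde Q-\frac{k_1}{a}\sum_{j\in\Nc'(i)}\alpha_j$. The first step is to record that, for a fixed announced $P_0$, the followers' subgame is dominance solvable: processing the SUs from the one decoded last (with $\Nc'=\emptyset$, hence threshold exactly $\tilde Q$) toward the one decoded first, each $\alpha_i$ is pinned down to $1$ if $P_0<Q_i$ and to $0$ if $P_0>Q_i$ (ties at $P_0=Q_i$ do not occur at the $\epsilon$-SEP, where the PU plays $P_0=Q_i-\epsilon$). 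Thus each decoding order induces a well-defined PU payoff $u_0^{\mathrm{eq}}(P_0)$, and by the preceding Proposition (equilibrium $=$ $\epsilon$-SEP, whose value is within $\epsilon$ of $\bar u_0=\sup_{P_0}u_0^{\mathrm{eq}}(P_0)$) it suffices to show that the order $N,N-1,\dots,1$ maximizes $\bar u_0$; I will in fact prove the stronger pointwise statement that this order maximizes $u_0^{\mathrm{eq}}(P_0)$ for \emph{every} $P_0$, and then let $\epsilon\downarrow 0$.

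The heart of the argument is an exchange lemma for two SUs $A,B$ decoded consecutively, $A$ immediately before $B$ (so $B\in\Nc'(A)$ but $A\notin\Nc'(B)$), with $b_A>b_B$. Let the new order swap them, so $B$ is decoded immediately before $A$. Because $A$ and $B$ are consecutive, every other SU $j$ has either $\{A,B\}\subseteq\Nc'(j)$ or $\{A,B\}\cap\Nc'(j)=\emptyset$; in particular the SUs in $S:=\{j:\{A,B\}\cap\Nc'(j)=\emptyset\}$ are decoded after both of $A,B$ and their equilibrium $\alpha_j$'s are unaffected by the swap, so $\hat Q:=\tilde Q-\frac{k_1}{a}\sum_{j\in S}\alpha_j$ is common to the two orders and equals the threshold of whichever of $A,B$ is decoded later. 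A three-case analysis on $P_0$ versus $\hat Q$ and $\hat Q-\frac{k_1}{a}$ shows: if $P_0<\hat Q-\frac{k_1}{a}$ both transmit; if $P_0\ge\hat Q$ neither transmits; and if $\hat Q-\frac{k_1}{a}\le P_0<\hat Q$ exactly one transmits --- in the old order it is $B$, in the new order it is $A$. In all three cases $\alpha_A+\alpha_B$ takes the same value for the two orders, so the threshold-modifying sums $\sum_{k\in\Nc'(j)}\alpha_k$, hence the strategies $\alpha_j$, hence the contributions $(1-\alpha_j)C(b_jP_0)$ in \eqref{eq:u0all}, are identical for every $j\notin\{A,B\}$. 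The only possible change is the contribution of $\{A,B\}$ to $\max_i(1-\alpha_i)C(b_iP_0)$: unchanged in the first and third cases, and in the middle case it drops from $C(b_AP_0)$ to $C(b_BP_0)$, which is smaller since $b_A>b_B$ and $C(\cdot)$ is increasing. Hence $u_0^{\mathrm{eq},\mathrm{new}}(P_0)\ge u_0^{\mathrm{eq},\mathrm{old}}(P_0)$ for all $P_0$.

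With the exchange lemma in hand, the proof concludes by a bubble-sort argument: starting from an arbitrary decoding order, repeatedly apply the lemma to any adjacent pair in which the earlier-decoded SU has the larger eavesdropper gain; each such swap weakly increases $u_0^{\mathrm{eq}}(P_0)$ pointwise, and finitely many swaps reorder the decoding sequence so that eavesdropper gains are non-decreasing along it, i.e. to $N,N-1,\dots,1$ (smallest $b$ decoded first, largest $b$ decoded last). Taking the supremum over $P_0$ gives $\bar u_0$ for the order $N,\dots,1$ no smaller than $\bar u_0$ for any other order, and since the game's equilibrium is the $\epsilon$-SEP, letting $\epsilon\downarrow 0$ yields the claim.

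The step I expect to be the main obstacle is the bookkeeping inside the exchange lemma --- specifically, verifying rigorously that $\alpha_A+\alpha_B$ is invariant under the swap and that this invariance truly freezes the remainder of the followers' equilibrium (both the SUs in $S$ and those decoded before the pair), via the dominance-solvability recursion. This is precisely where uniformity of the secondary parameters is used: it makes $\hat Q$ a common pivot for $A$ and $B$ and makes the per-SU threshold decrement the constant $k_1/a$, so that ``how many of $A,B$ transmit'' depends only on $P_0$ while ``which one transmits'' --- the only quantity that differs between the two orders --- can be steered in the PU's favor. (In the non-uniform case the swap perturbs the two thresholds asymmetrically and need not preserve $\alpha_A+\alpha_B$, which is why the conclusion can fail there, as in Figure~\ref{fig:example2}.) A secondary point to spell out is tie-breaking at the thresholds $Q_i$, but this is moot because the $\epsilon$-SEP has the PU play $P_0=Q_i-\epsilon$, so all followers' best responses are unique $\{0,1\}$ vectors.
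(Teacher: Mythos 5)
Your proof is correct, but it takes a genuinely different route from the paper's. The paper argues directly on the leader's optimization: it enumerates the candidate set $\Pc$ of optimal primary power levels, namely the points $Q_i(1,\cdots,1)-\epsilon$ and $P^i(0)$ for the SUs with $Q_i(1,\cdots,1)<P^*(1)$, and then argues (i) that an ordering in which a stronger eavesdropper receives a smaller threshold can only shrink $\Pc$, and (ii) that at any surviving candidate $Q_i(\cdot)-\epsilon$ the value of $u_0$ is largest under the ordering $Q_1\geq Q_2\geq\cdots\geq Q_N$, using the monotonicity $u_0^1(P_0,0)\leq\cdots\leq u_0^N(P_0,0)$ and the fact that $P^i(0)$ is non-decreasing in $b_i$. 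You instead prove the stronger pointwise statement that, for every announced $P_0$, the induced followers'-equilibrium payoff $u_0^{\mathrm{eq}}(P_0)$ is weakly improved by any adjacent transposition that moves the weaker eavesdropper earlier in the decoding order, and then bubble-sort. Your exchange lemma is exactly where uniformity enters: it makes $\hat{Q}$ a common pivot for the pair, so $\alpha_A+\alpha_B$ is invariant under the swap, which freezes the remainder of the backward recursion and leaves only the identity of the single eavesdropping SU in the middle regime to change, in the PU's favor since $C(b_BP_0)\leq C(b_AP_0)$. What your approach buys is a cleaner, more self-contained dominance statement (pointwise in $P_0$, so the comparison of $\bar{u}_0$ and of the $\epsilon$-SEP values across orderings is immediate) and a sharper explanation of why the conclusion can fail without uniformity (the swap then perturbs the two thresholds asymmetrically and need not preserve $\alpha_A+\alpha_B$), consistent with the counterexample of Figure \ref{fig:example2}. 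What the paper's approach buys is the explicit characterization of the leader's search space $\Pc$, which is what Algorithm 1 actually iterates over. Two bookkeeping points you should tidy, neither a real gap: your set $S$ as written contains $B$ itself and should be restricted to $j\notin\{A,B\}$ so that $\hat{Q}$ coincides with the threshold of the later-decoded member of the pair; and the indifference points $P_0\in\{Q_i\}$, where the followers' subgame is not uniquely solvable, form a finite set that the leader's $\epsilon$-SEP strategy avoids, so they do not affect the supremum comparison.
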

\begin{proof}
It can be seen that $P_0^{SEP}=P^*(1)$ if and only if $Q_i(1,1,\cdots,1)>P^*(1)$ for all $i\in\{1,2,\cdots,N\}$. If $Q_i<P^*(1)$ for any $i$, then $P_0^{SEP}<P^*(1)$. Only in this case $P_0^{SEP}$ is function in the decoding order (i.e., the order of $Q_1(\cdot),Q_2(\cdot),\cdots,Q_N(\cdot)$).

Now suppose $Q_i<P^*(1)$ for some $i$. To find $P_0^{SEP}$, we first recall that $u_0$ is dominated by SU $i$ with largest $b_i$ within the set of SUs with $\alpha_i=0$. When $P_0 <Q_i$, $u_0$ is dominated by the SU with next largest eavesdropper channel. Thus, we need to check the value of $u_0(P_0,\alpha_1,\cdots,\alpha_N)$ for all possible values for $P_0$ within the set $\{Q_i(1,1,\cdots,1)-\epsilon,P^i(0)\}$ for all $i$ such that $Q_i(1,1,\cdots,1)<P^*(1)$ and in the order of decreasing $b_i$. Let the search space be denoted as $\Pc$. It can be seen that both $|\Pc|$ and the value of $P_0^{SEP}$ that maximizes $u_0(\cdot)$ will depend on the order of $Q_i$ and thus on the decoding order. Specifically, first, choosing a decoding order such that $Q_i\leq Q_j$ while $b_i\geq b_j$ can only decrease $|\Pc|$. This follows from the fact that $Q_j(1,1,\cdots,1)-\epsilon \notin |\Pc|$ if there exists $i$ such that $Q_i(\cdot)\leq Q_j(\cdot)$ and $b_i\geq b_j$ since either $P_0^{SEP}<Q_i(\cdot)$ or $P_0^{SEP}=P^i(0)$ where in both cases SU $j$ cannot dominate $u_0(\cdot)$ and thus $Q_j(1,1,\cdots,1)-\epsilon \notin |\Pc|$. Second, from the structure of $u_0(\cdot)$, it can be seen that $u^1_0(P_0,0)\leq u^2_0(P_0,0)\leq \cdots \leq u^N_0(P_0,0) \forall P_0$. In addition, since $u^i_0(P_0,0)$ is increasing for $P_0\leq P^i(0)$, then $P^i(0)\geq P^j(0) \forall\ i,j$ with $b_i\geq b_j$ (i.e., $P^i(0)$ is non-decreasing in $b_i$). Hence, for any $Q_i(\cdot)-\epsilon \in |\Pc|$, $u_0$ is the largest at $Q_i(\cdot)-\epsilon$ if the decoding order is such that $Q_1\geq Q_2\geq\cdots\geq Q_N$.
\end{proof}
In Figure \ref{fig:exampleProof}, we present an example to illustrate the idea of the proof of the decoding order that maximizes the utility of the primary user, where $N=2$.
\begin{figure}[thb]
    \centering
    \includegraphics[width=0.85\columnwidth]{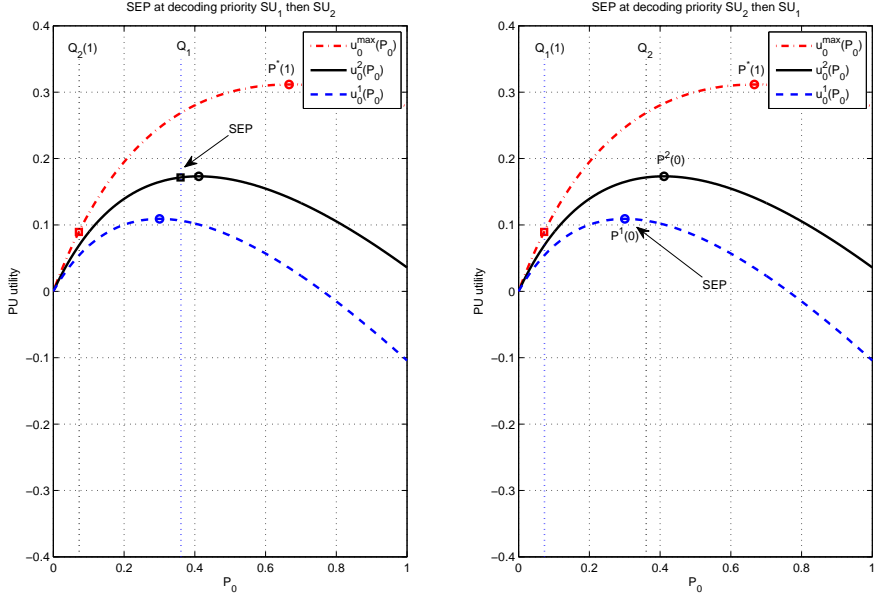}
    \caption{Example for the illustration of SEP analysis when $N$=2. Here, $a=3,\gamma=\log2(4)^{-1},b_1=0.7,b_2=0.4,k_1=0.8625,k_2=0.25$. In this example, when SU 2 is given priority in decoding, $Q_1(1)$ is very small and PU cannot increase its utility above $u_0^1(P^1(0),0,\times)$. Here, $\alpha_1^{SEP} = 0$ and the the value of $\alpha_2^{SEP}$ is irrelevant to the PU since SU 1 dominates $u_0(\cdot)$. Thus, the base station can simply block access to SU 2 and thus $\alpha_2^{SEP} = 0$ as well.}
    \label{fig:exampleProof}
\end{figure}

In the following section, we propose an iterative algorithm that can be implemented at the base station to find the value of $P_0^{SEP}$ at the equilibrium as well as find the set of SUs whose signals are to be decoded to maximize the utility of the primary user. 

\subsection{Proposed Algorithm}
\label{sec:alg}
Based on the analysis in the previous sections, we propose an algorithm that iteratively calculates the value of $P_0$ at the equilibrium of the game, for a fixed decoding order. The algorithm also constructs the subset of all SUs in the network to decode their signals. 

Given feedback from the base station, the PU will be able to calculate its transmission power level that maximizes the utility $u_0$. Since calculation of the optimal primary power level requires some processing, we suggest that this processing is done at the base station and then the optimal power level is sent to the PU to use. Given the channel state information of the network, the base station orders the list of all SUs in an ascending order according to their eavesdropper channel gains. For the case of uniform secondary user parameters, the utility of the PU is the maximum possible at this decoding order. Then the base station announces the identities and order of SUs that will have their signals decoded. Given this announcement, the PU will start transmission at the power level that maximizes it's utility given the selection of SUs. This power level is actually the Stackelberg equilibrium strategy for the leader derived in Section \ref{sec:eq_anal}. In addition, SUs that are not selected will immediately start eavesdropping (choose $\alpha_i = 0$) while the selected SUs will transmit their own information (choose $\alpha_i = 1$). 

For a game with only PU and SU $i$, it can be seen from Lemmas \ref{lemma:sep1}, \ref{lemma:sep2} that the Stackelberg equilibrium strategy for the PU is given by:
\begin{align}
P^{SEP} =
\begin{cases}
\argmax_{P\in\{Q_i-\epsilon,P^i(0)\}}u^i_0(\cdot) &\text{if }Q_i \leq P'(1)\\
P^i(1); &\text{if } P^i(1) < Q_i.
\label{eq:P}
\end{cases}
\end{align}

The following iterative algorithm defines the selected SU set $Allowed \ SUs$ and PU power level $P$ that maximizes the utility function of the PU in \eqref{eq:u0all}. At iteration $i$, the strongest SU channel, within SUs that are not selected, is considered, the optimal PU power value is updated and the decision to add another SU to the set $Allowed \ SUs$ is made based on \eqref{eq:Qi} and \eqref{eq:P}. The algorithm considers the case where $N>2$. It also only considers cases where there exists at least one SU $i$ with $Q_i(1,1,\cdots,1)<P^*(1)$ since it is easy to see that $P_0^{SEP} = P^*(1)$ and the all SUs are granted spectrum access in the other case.

\begin{algorithm}
	\caption{Algorithm for finding SUs to grant spectrum and calculating $P_0^{SEP}$}
	\begin{algorithmic}
	\State begin initialization
	\State $Allowed\ SUs=\{k:Q_k(1,1,\cdots,1)\geq P^*(1) \}$, $j = \argmax_{i\in\Nc,Q_i(\cdot)<P^*(1)}b_i$, $P_0^{SEP} = 0$, $u_0^{SEP} = 0$
	\State end initialization
	\For {$i=j$ to $N$}
		\If {$Q_i(1,1,\cdots,1) \geq P^i(0)$}
			\State $P_0^{SEP} \gets Q_i(1,1,\cdots,1)-\epsilon$
			\State $u_0^{SEP} \gets u_0^{i+1}(Q_i(1,1,\cdots,1)-\epsilon,0)$
			\State $Allowed\ SUs \gets Allowed\ SUs\cup \{i\}$
		\Else
			\If {$u_0^i(P^i(0),0) \geq u_0^{SEP} \&\ P^i(0)\leq Q_{i-1}(1,1,\cdots,1) $}
				\State $P_0^{SEP} \gets P^i(0)$
				\State $u_0^{SEP} \gets u_0^i(P^i(0),0)$
			\EndIf
			\If {$u_0^{i+1}(Q_i(1,1,\cdots,1)-\epsilon,0) \geq u_0^{SEP}$}
				\State $P_0^{SEP} \gets Q_i(1,1,\cdots,1)-\epsilon$
				\State $u_0^{SEP} \gets u_0^{i+1}(Q_i(1,1,\cdots,1)-\epsilon,0)$
				\State $Allowed\ SUs \gets Allowed\ SUs\cup \{l:l\leq i\}$
			\EndIf
		\EndIf
	\EndFor
	\end{algorithmic}
\end{algorithm}

{\it Description of the algorithm:} The algorithm starts by adding all SUs with large threshold (larger than $P^*(1)$) to the set of allowed SUs. In addition, the search for $P_0^{SEP}$ is started by considering the parameters of the SU with strongest eavesdropper channel and with threshold less than $P^*(1)$, since this SU is dominating $u_0(\cdot)$ when it is eavesdropping. The effect of eavesdropping of each SU is then assessed. For each SU $i$, we check (in the order of descending $b_i$) the value of $u_0$ at both $P_0 = Q_i(1,1,\cdots,1)-\epsilon$ and $P_0 = P^i(0)$ and then add this SU to the set of allowed SUs if it improves $u_0$. Note that if $Q_i(1,1,\cdots,1) < P^i(0)$ for some $i$, then $Q_j(1,1,\cdots,1) < P^j(0)$ for all the following $j>i$ since $Q_j(1,1,\cdots,1)<Q_i(1,1,\cdots,1)$ and $P^j(0) > P^i(0)$ for all $i,j$ such that $i<j$. Here, decreasing $P_0^{SEP}$ can either be at $P_0^{SEP} = P^i(0)$ or $P_0^{SEP} =Q_i(1,1,\cdots,1)-\epsilon$. However, $P_0^{SEP} = P^i(0)$ can be an equilibrium point only if $P^i(0)\leq Q_{i-1}(1,1,\cdots,1)$. Moreover, if $P_0^{SEP} = P^i(0)$ at a given iteration, then SU $i$ is not added to the set $Allowed\ SUs$. It is possible that adding some lower priority SU $k>i$ (by choosing $P_0^{SEP}Q_k(1,1,\cdots,1)-\epsilon$ in the following iterations may improve $u_0^{SEP}$. In this case, this SU as well as all the preceding ones with stronger eavesdropper channels are needed to be granted access and thus added to the set $Allowed\ SUs$.

\subsection{Numerical Examples}
\label{sec:perf}
In this section, we present numerical examples showing how the PU utility varies with the number of SUs in the network. In addition, to show the tradeoff between the decoder complexity and the optimality of the primary utility, we compare the average utility achieved by the primary user when no SUs are granted spectrum, when only one SU is granted spectrum access (strongest eavesdropper) and when multiple SUs are granted spectrum according to Algorithm 1. We plot the average primary utility versus the increasing number of SUs. Here, we consider Rayleigh fading eavesdropper channels with mean equals to $1$. We fix $\epsilon=10^{-3}$, $c_i=0.5, P_i^{\text{max}}=4.5\ \forall i$, $P^{\text{max}}_0 = 4.5$, $a = 2$, and $\bar{\gamma} =0.2$. With this choice of parameters, $P^*(1) = 4.5$ and $u_0^0(P^*(1))=1.01$. In Figure \ref{fig:uniform}, we plot the average primary utilities for two cases. In the first case, we set $\beta_i=0.1,\ \forall i$. Here, the values of the thresholds for the $20$ SUs ranges from $7.07$ to $-14.3$ with $Q_i>P^*(1)$ for $i\in\{1,2,3\}$, $Q_i>0$ for $i\in\{1,\cdots,7\}$ and $Q_i<0$ otherwise. In the second case, we set $\beta=0.2$. The values for the thresholds ranges in $[4.3,-16.6]$ with only three positive thresholds. 

In the plots, it can be seen that the gain provided by our algorithm decreases as the number of SUs increase since the probability of having an SU with large eavesdropping channel gain and low threshold increases. For moderate number of SUs, however, the gain from decoding the signals from multiple SUs is apparent, compared to the case where only one SU is granted spectrum. For example, in the second case, by decoding signal of $3$ SUs, on average, the PU utility achieved by our algorithm at $N=10$ is improved more than $10$ times compared to decoding the signal of only one SU. In addition, in the first plot, since SU thresholds are higher, more SUs are granted access on average improving $u_0(\cdot)$ with respect to the second case. In addition, in the first case, we have a flat curve at low $N$ since the first few values for the thresholds are already larger than $P^*(1)$ and thus these SUs are granted access without compromising in the utility of the PU.

\begin{figure}[thb]
    \centering
    \includegraphics[width=0.75\columnwidth]{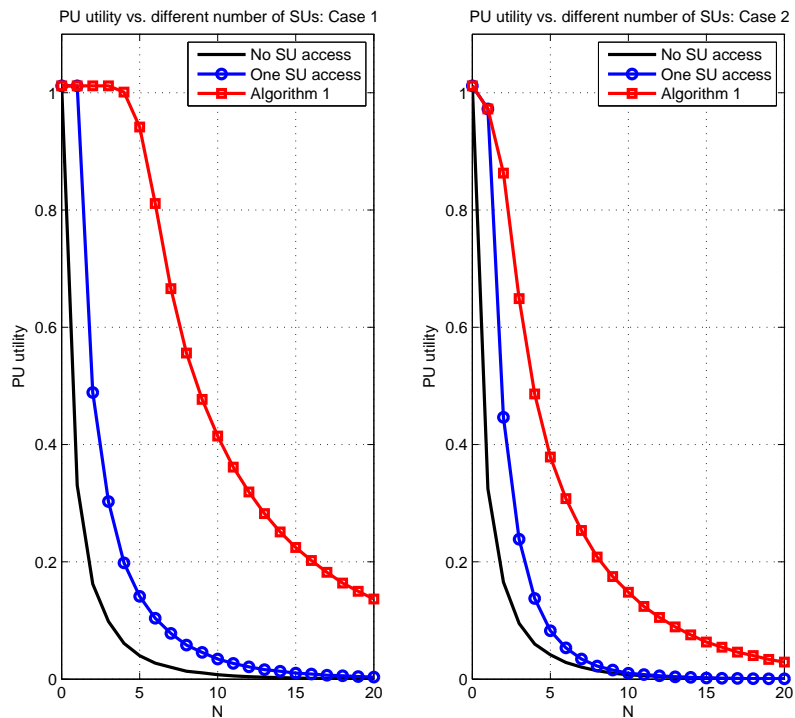}
    \caption{Average PU utility for the case of uniform SUs parameters.}
    \label{fig:uniform}
\end{figure}

In Fig. \ref{fig:avgSUallowed}, we plot the average number of SUs that are granted access. In each of the considered two cases, the number of positive threshold values constitutes a deterministic upper bound, since no SU positive PU power level can turn an eavesdropping SU with negative threshold into transmission mode. By increasing the number of SU, the probability of having smaller values eavesdropping channel gain $b_i$ increases and thus the probability to achieve higher value for $u_0$ at the same (deterministic) threshold level increases. Thus, the average cardinality of the set $Allowed\ SUs$ is increasing as it is evident in the plots.

\begin{figure}[thb]
    \centering
    \includegraphics[width=0.75\columnwidth]{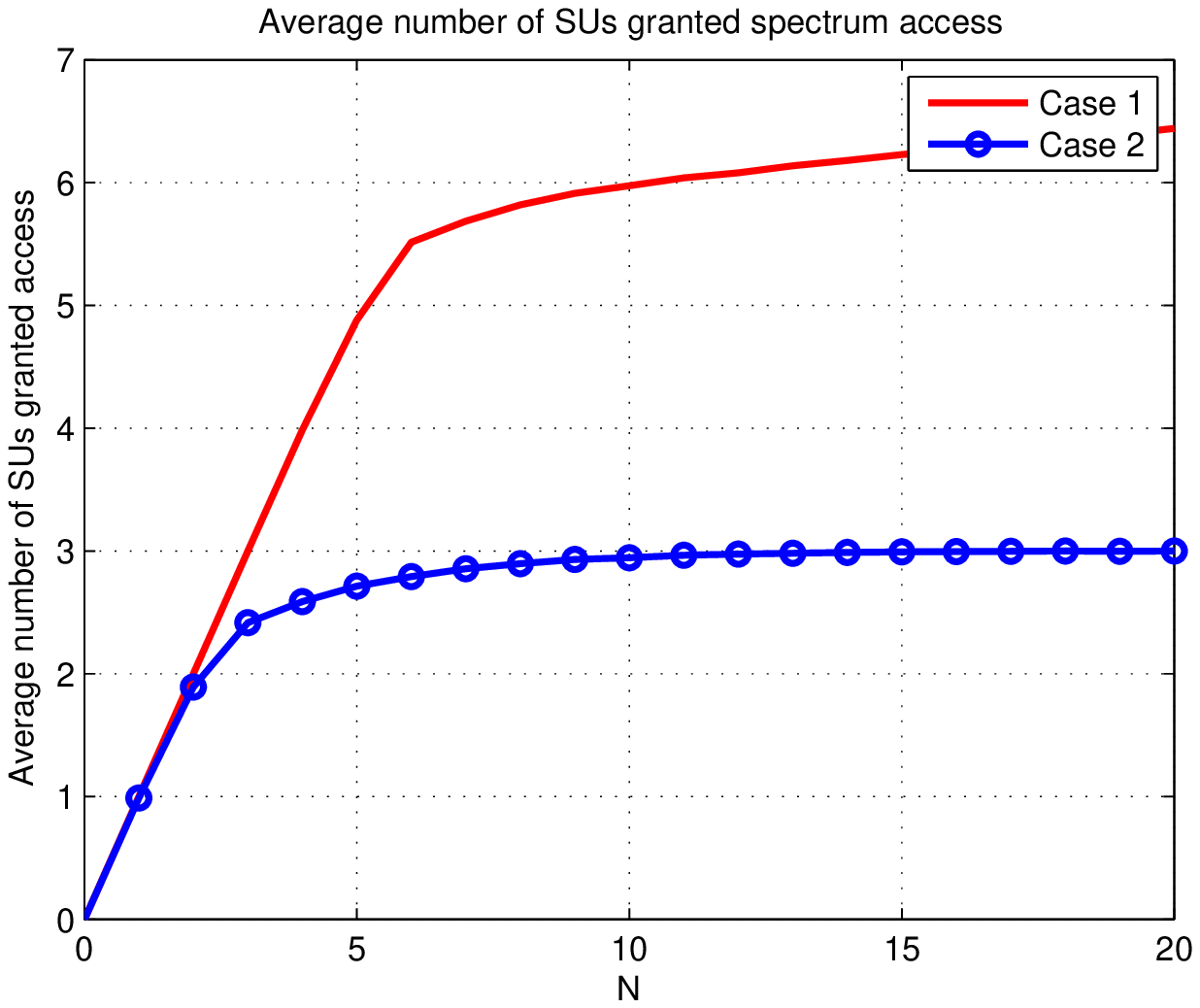}
    \caption{Average number of SUs that are granted spectrum access in case of uniform  SUs parameters.}
    \label{fig:avgSUallowed}
\end{figure}

Finally, in Figure \ref{fig:non-uniform}, we study the average PU utility achieved for the non-uniform case of SU parameters. Here, $c_i$  is chosen randomly according to a Rayleigh distribution with mean equals to $0.5$ and the rest of parameters have the same values as in the previous case. Even though the decoding order employed is suboptimal in this more general setting, the primary system can still achieve performance gain by decoding signals of SUs in this order.

\begin{figure}[thb]
    \centering
    \includegraphics[width=0.75\columnwidth]{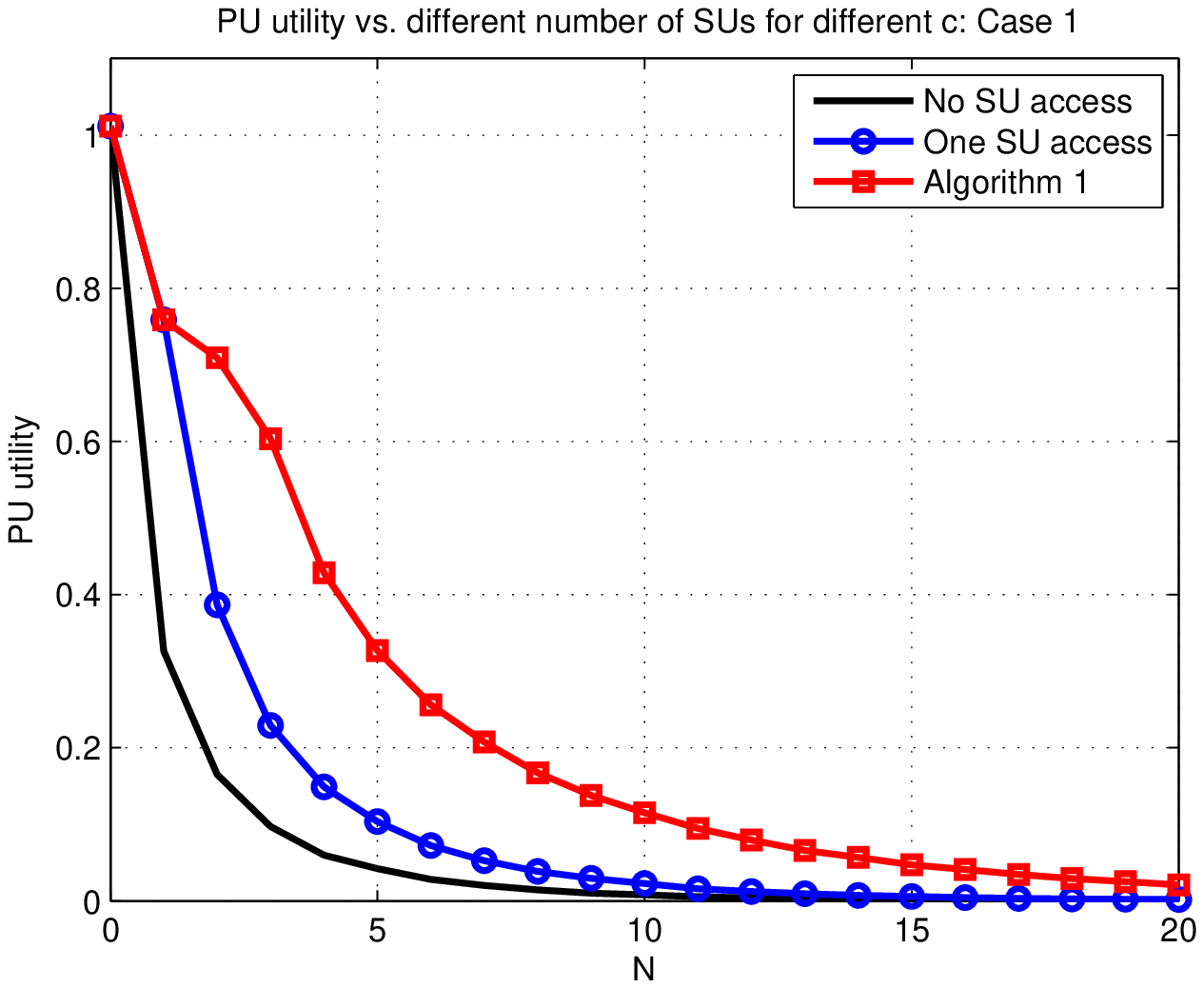}
    \caption{Average PU utility for the case of non-uniform SUs parameters.}
    \label{fig:non-uniform}
\end{figure}

\section{Conclusion}
\label{sec:conclusion}
In this paper, we analyzed an adversarial situation between primary users and secondary users of a CRN communicating with a common destination. Both types of users are interested in maximizing their own data rate at the minimum possible energy. The cognitive half duplex users threaten the primary users to eavesdrop the primary traffic if they are not allowed to access the spectrum and transmit their own information. Using tools from non-cooperative game theory as well as information theoretic transmission strategies, it is shown that the eavesdropping capability of the secondary users may force the primary users to reduce their power level so that the secondary users achieve non-negative utility. However, by using Stackelberg formulation, we show that the primary users can specify the allowable secondary rate and the secondary users are forced to comply to this specification, even if its achieved positive utility is small. Moreover, the result also holds for the stronger case when the primary channel is weaker than the eavesdropper channel. We also presented a numerical study for the case when the eavesdropper channel state is known only statistically at the primary users. By not revealing information about the eavesdropper channel gain, we show that the cognitive users can improve their achieved utility, unless the eavesdropper channel is weak. Finally, based on the equilibrium analysis, we presented an iterative algorithm for the primary system to find the equilibrium strategy that maximizes the performance of the primary users.

\bibliographystyle{IEEEtran}
\bibliography{cognitive-threat-journal}

\begin{thebibliography}{10}
\providecommand{\url}[1]{#1}
\csname url@samestyle\endcsname
\providecommand{\newblock}{\relax}
\providecommand{\bibinfo}[2]{#2}
\providecommand{\BIBentrySTDinterwordspacing}{\spaceskip=0pt\relax}
\providecommand{\BIBentryALTinterwordstretchfactor}{4}
\providecommand{\BIBentryALTinterwordspacing}{\spaceskip=\fontdimen2\font plus
\BIBentryALTinterwordstretchfactor\fontdimen3\font minus
  \fontdimen4\font\relax}
\providecommand{\BIBforeignlanguage}[2]{{%
\expandafter\ifx\csname l@#1\endcsname\relax
\typeout{** WARNING: IEEEtran.bst: No hyphenation pattern has been}%
\typeout{** loaded for the language `#1'. Using the pattern for}%
\typeout{** the default language instead.}%
\else
\language=\csname l@#1\endcsname
\fi
#2}}
\providecommand{\BIBdecl}{\relax}
\BIBdecl

\bibitem{Akyildiz:NeXt06}
I.~F. Akyildiz, W.~Y. Lee, M.~Vuran, and S.~Mohanty, ``Ne{X}t
  generation/dynamic spectrum access/cognitive radio wireless networks: A
  survey,'' \emph{Comput. Netw.}, vol.~50, no.~13, pp. 2127--2159, Sep. 2006.

\bibitem{FCC}
FCC, ``Spectrum policy task force, report of the spectrum efficiency working
  group,'' FCC, Tech. Rep., Nov. 2002.

\bibitem{Jayaweera:Dynamic09}
S.~Jayaweera and T.~Li, ``Dynamic spectrum leasing in cognitive radio networks
  via primary-secondary user power control games,'' \emph{Wireless
  Communications, IEEE Transactions on}, vol.~8, no.~6, pp. 3300 --3310, june
  2009.

\bibitem{Khalil:Optimal11}
K.~Khalil, M.~Karaca, O.~Ercetin, and E.~Ekici, ``Optimal scheduling in
  cooperate-to-join cognitive radio networks,'' in \emph{IEEE INFOCOM}, 2011.

\bibitem{Liu:Securing09}
R.~Liu and W.~Trappe, \emph{Securing Wireless Communications at the Physical
  Layer}, 1st~ed.\hskip 1em plus 0.5em minus 0.4em\relax Springer Publishing
  Company, Incorporated, 2009.

\bibitem{Liu:Cognitive11}
K.~J.~R. Liu and B.~Wang, \emph{Cognitive Radio Networking and Security}.\hskip
  1em plus 0.5em minus 0.4em\relax Cambridge University Press, 2011.

\bibitem{Wu:Information11}
Y.~Wu and K.~Liu, ``An information secrecy game in cognitive radio networks,''
  \emph{Information Forensics and Security, IEEE Transactions on}, vol.~6,
  no.~3, pp. 831 --842, sept. 2011.

\bibitem{Srivastava:Using05}
V.~Srivastava, J.~Neel, A.~Mackenzie, R.~Menon, L.~Dasilva, J.~Hicks, J.~Reed,
  and R.~Gilles, ``Using game theory to analyze wireless ad hoc networks,''
  \emph{Communications Surveys Tutorials, IEEE}, vol.~7, no.~4, pp. 46 -- 56,
  quarter 2005.

\bibitem{Wang:Game10}
B.~Wang, Y.~Wu, and K.~Liu, ``Game theory for cognitive radio networks: An
  overview,'' \emph{Computer Networks}, vol.~54, no.~14, pp. 2537--2561, 2010.

\bibitem{Belmega:Power09}
E.-V. Belmega, S.~Lasaulce, and M.~Debbah, ``Power allocation games for mimo
  multiple access channels with coordination,'' \emph{Wireless Communications,
  IEEE Transactions on}, vol.~8, no.~6, pp. 3182 --3192, june 2009.

\bibitem{Lai:Water08}
L.~Lai and H.~El~Gamal, ``The water-filling game in fading multiple-access
  channels,'' \emph{Information Theory, IEEE Transactions on}, vol.~54, no.~5,
  pp. 2110 --2122, may 2008.

\bibitem{Karamchandani:Cooperation11}
N.~Karamchandani, P.~Minero, and M.~Franceschetti, ``Cooperation in
  multi-access networks via coalitional game theory,'' in \emph{Communication,
  Control, and Computing (Allerton), 2011 49th Annual Allerton Conference on},
  sept. 2011, pp. 329 --336.

\bibitem{Akkarajitsakul:Game11}
K.~Akkarajitsakul, E.~Hossain, D.~Niyato, and D.~I. Kim, ``Game theoretic
  approaches for multiple access in wireless networks: A survey,''
  \emph{Communications Surveys Tutorials, IEEE}, vol.~13, no.~3, pp. 372 --395,
  quarter 2011.

\bibitem{Simaan:Equilibrium77}
M.~Simaan, ``Equilibrium properties of the nash and stackelberg strategies,''
  \emph{Automatica}, vol.~13, pp. 635--636, 1977.

\bibitem{Cover:Elements91}
T.~M. Cover and J.~A. Thomas, \emph{Elements of Information Theory}.\hskip 1em
  plus 0.5em minus 0.4em\relax Wiley-Interscience, 1991.

\bibitem{Csiszar:Broadcast78}
I.~Csiszar and J.Korner, ``Broadcast channels with confidential messages,''
  \emph{IEEE Trans. Inf. Theory}, vol.~24, pp. 339--348, 1978.

\bibitem{Basar:Dynamic95}
T.~Basar and G.~J. Olsder, \emph{Dynamic Noncooperative Game Theory}.\hskip 1em
  plus 0.5em minus 0.4em\relax Academic Press, 1995.

\bibitem{Fudenberg:Game93}
D.~Fudenberg and J.~Tirole, \emph{Game Theory}.\hskip 1em plus 0.5em minus
  0.4em\relax MIT Press, 1993.

\bibitem{Myerson:Game91}
R.~B. Myerson, \emph{Game Theory Analysis of Conflict}.\hskip 1em plus 0.5em
  minus 0.4em\relax Harvard University Press, 1991.

\bibitem{Khalil:Multiple12}
K.~Khalil and E.~Ekici, ``Multiple access game with a cognitive jammer,'' in
  \emph{46th Asilomar Conference on Signals, Systems and Computers}, 2012.

\bibitem{Wolfram:Mathematica7}
{Wolfram Research, Inc.}, ``Mathematica,'' version 7.0.

\end{thebibliography}

\balance

\end{document}